\titleformat*{\section}{\large\bfseries}
\titleformat*{\subsection}{\large\bfseries}
\DeclarePairedDelimiter\floor{\lfloor}{\rfloor}
\newtheorem{theorem}{\textbf{Theorem}}
\newtheorem{assumption}{\textbf{Assumption}}
\newtheorem{lemma}{\textbf{Lemma}}
\newtheorem{corollary}{\textbf{Corollary}}
\newtheorem{remark}{\textbf{Remark}}
\newtheorem{definition}{\textbf{Definition}}
\newcommand\blfootnote[1]{%
	\begingroup
	\renewcommand\thefootnote{}\footnote{#1}%
	\addtocounter{footnote}{-1}%
	\endgroup
}
\title{Adaptive Serverless Learning}
\author[1]{Hongchang Gao}
\author[2]{Heng Huang}
\affil[1]{Temple University}
\affil[2]{University of Pittsburgh}
\date{}
\begin{document}

\maketitle

\begin{abstract}
	With the emergence of distributed data, training machine learning models in the serverless manner has attracted increasing attention in recent years. Numerous training approaches have been proposed in this regime, such as decentralized SGD. However, all existing decentralized algorithms only focus on standard SGD. It might not be suitable for some applications, such as deep factorization machine in which the feature is highly sparse and categorical so that the adaptive training algorithm is needed. 
	In this paper, we propose a novel adaptive decentralized training approach, which can compute the learning rate from data dynamically.  To the best of our knowledge, this is the first adaptive decentralized training approach. Our theoretical results reveal that the proposed algorithm can achieve linear speedup with respect to the number of workers. Moreover, to reduce the communication-efficient overhead, we further propose a communication-efficient adaptive decentralized training approach, which can also achieve linear speedup with respect to the number of workers. At last, extensive experiments on different tasks have confirmed the effectiveness of our proposed two approaches. 
\end{abstract}

\blfootnote{hongchanggao@gmail.com, heng.huang@pitt.edu}

\section{Introduction}
\vspace{-5pt}
With the development of intelligence devices, a huge amount of data is generated and distributed in diverse devices. The emergence of the huge amount of distributed data challenges the feasibility of traditional training strategies for machine learning models. In recent years, to address this issue, distributed training approaches for large-scale machine learning models have attracted a surge of attention in machine learning community.  In particular, a distributed training system contains multiple worker nodes, which can simultaneously train the machine learning model on the local data of each worker node.  In this way, the large-scale distributed data can be efficiently handled by the computational power of these worker nodes. Formally, a distributed training system is to optimize the following problem:
\begin{equation} \label{lossfunction}
\begin{aligned}
&    \min_{\mathbf{x}} f(\mathbf{x}) = \frac{1}{K}\sum_{k=1}^K f^{(k)}(\mathbf{x}) \ , \\
\end{aligned}
\end{equation}
where $\mathbf{x}\in \mathbb{R}^d$ denotes the model parameter and $K$ indicates the number of worker nodes. Here, the $k$-th worker node optimizes the loss function $f^{(k)}(\mathbf{x})=\mathbb{E}_{\xi\sim\mathcal{D}^{(k)}} F^{(k)}(\mathbf{x}; \xi)$ where $\mathcal{D}^{(k)}$ is the data distribution on the $k$-th worker. In this paper,  we restrict our focus on the non-convex problem.

A commonly used distributed training approach to solve Eq.~(\ref{lossfunction}) is the centralized parallel stochastic gradient descent (C-PSGD).  There are one server node and multiple worker nodes. In each iteration, the worker node computes the gradient based on its local data and sends it to the server node, then the server node averages these gradients and sends the averaged one back to each worker node. It can be seen that all worker nodes need to communicate with the server node, which might cause the communication traffic jam on the server node and downgrade the parallelization performance. An alternative strategy is the serverless architecture (a.k.a. decentralized training), where each worker node updates the model parameter based on its local data and only communicates with its neighbors rather than the central server. In other words, the decentralized training approach is more friendly to communication, especially when the number of worker nodes is very large. As a result, the decentralized training approach has attracted more and more attention recently. 

Considering the promising performance in communication, a lot of decentralized training algorithms \cite{alghunaim2019linearly,bianchi2013performance,lian2017can,reisizadeh2019robust,seaman2017optimal,yu2019linear} have been proposed in recent years. For instance, \cite{lian2017can} proposed the decentralized parallel stochastic gradient descent (D-PSGD) algorithm based on the gossip averaging method and showed that the convergence rate of D-PSGD has the consistent leading term with that of C-PSGD.  However, all these approaches only focus on decentralizing standard SGD, where the same learning rate is employed across different worker nodes. It's not satisfactory for many real-world applications. For instance, \cite{zhang2019adam} discloses that standard SGD is not suitable for the case where the stochastic gradient has the heavy-tailed noise. In addition, the data on different worker nodes may have different properties so that it is not reasonable to use the same learning rate across different worker nodes.  Moreover, it is impractical to tune the learning rate for each worker in a decentralized training system. Therefore, it is necessary and important to incorporate the adaptive learning rate for decentralized training. However, it is challenging and unclear how to apply the adaptive learning rate to the decentralized training approach with convergence guarantee due to its considerable complication. In particular, adopting adaptive learning rates will cause different worker nodes have different learning rates so that it is much more difficult to analyze its convergence rate.

In this paper, to address the aforementioned issues, we propose a new decentralized training approach with adaptive learning rates: decentralized Adam. In detail, each worker node employs its own learning rate in terms of the local data as Adam~\cite{kingma2014adam}. Consequently, the learning rate is adapted to the specific data on each worker node, and there is no need to tune the learning rate for each worker node.   As far as we know, this is the first \textit{decentralized} training approach with adaptive learning rates. Furthermore, unlike most existing decentralized training approaches that conduct communication at each iteration, our approach conducts communication at every $p$ (where $p>1$) iterations to reduce the communication overhead. To the best of our knowledge, this is also the first \textit{adaptive decentralized} training approach with skipping communication. However, skipping communication rounds leads to new challenges for the convergence analysis, especially under the setting of adaptive learning rates. In this paper, we successfully address these challenging issues and obtain the convergence rate of our proposed approach. In particular, the linear speedup with respect to the number of workers is established, which is the first work obtaining this conclusion in this regime. 

However, although our proposed first approach can reduce the number of communication rounds, yet the communication cost in each round can still be the bottleneck when the model is large. For instance, a regular ResNet152 \cite{he2016deep} model is as large as 240MB. To mitigate this issue, we further propose a  communication-efficient adaptive decentralized training approach, which compresses the communicated model parameter between neighbor nodes. Although compressed communication has been studied in the decentralized optimization area \cite{koloskova2019decentralizednonconv,koloskova2019decentralizedcon,tang2018communication}, yet all existing approaches only focus on standard SGD. They cannot be directly applied to the adaptive learning rate case. In fact, the adaptive learning rate and skipping communication rounds cause new challenges for its convergence analysis. In this paper, we also proved its convergence rate and  the linear speedup. At last, our extensive experimental results have verified the effectiveness of our proposed two approaches.  Here, we summarize the contributions of our work as follows:
\begin{itemize}
	\setlength{\itemsep}{0pt}
	\vspace{-3pt}
	\item We propose a new adaptive decentralized training approach, which employs the adaptive learning rate on each worker node.  This is the first work that employs adaptive learning rates for decentralized training. 
	\item We also propose a new communication-efficient adaptive decentralized training approach, which skips communication rounds and compresses the communicated model parameter to reduce the communication cost. This is also the first work that is communication-efficient for adaptive decentralized learning. 
	\item We establish the convergence rate of our proposed two approaches. The extensive experimental results confirm the effectiveness of our proposed approaches.
\end{itemize}

\vspace{-3pt}
\section{Related Works}
\vspace{-3pt}
\textbf{Decentralized Training} The decentralized training approaches have attracted increasing interest in recent years due to its promising performance in communication. Recently, numerous approaches \cite{alghunaim2019linearly,assran2018stochastic,li2019communication,lian2017can,yu2019linear} have been proposed. For instance, \cite{lian2017can} proposes the gossip-based decentralized SGD. Their results show that the leading term of the convergence rate of their approach is as good as that of the centralized SGD approach for non-convex problems. \cite{alghunaim2019linearly} studies the decentralized proximal gradient descent approach for the non-smooth problems and gets a linear convergence rate. \cite{assran2018stochastic} proposes a stochastic gradient push approach to make the decentralized optimization robust to stragglers and communication delays. However, all these approaches only focus on using the same learning rates across different worker nodes. 

\textbf{Adaptive Learning Rate} Although SGD has shown impressive performance for some machine learning tasks, yet it does not perform well for some other tasks, such as the training of graph convolutional neural networks \cite{kipf2016semi} and deep factorization machines \cite{guo2017deepfm}. Especially, \cite{zhang2019adam} shows that SGD is not suitable for the task where the stochastic gradient has the heavy-tailed noise. On the contrary, the adaptive approach, such as AdaGrad \cite{duchi2011adaptive}, Adam \cite{kingma2014adam}, AMSGrad \cite{reddi2019convergence}, can perform well. These approaches compute the learning rate from the historical stochastic gradient automatically so that they can capture the specific properties of the data. Recently, \cite{reddi2020adaptive} applies the adaptive learning rate to centralized training. However, this approach only uses the adaptive learning rate on the server and still adopts the same constant learning rate on the worker nodes. Thus, this approach simplifies the adaptive learning rate too much and cannot fully exploit the benefit of the adaptive learning rate. Furthermore, there are no works applying the adaptive learning rate for decentralized training. To the best of our knowledge, our work is the first adaptive decentralized training approach. 

\textbf{Efficient Communication} For the decentralized training, the communication overhead might be the bottleneck when the model is large. To address this issue, some communication-efficient approaches \cite{koloskova2019decentralizedcon,koloskova2019decentralizednonconv,tang2018communication,li2019communication} have been proposed in recent years.  To reduce the communication overhead, two strategies \cite{alistarh2017qsgd,wen2017terngrad,bernstein2018signsgd,yu2019linear,stich2018sparsified} are commonly used. The first one is to reduce the number of communication rounds. For instance, \cite{li2019communication} proposes a decentralized local SGD approach, which combines the federated averaging and decentralized SGD to reduce the number of communication rounds. The other strategy is to reduce the communication cost in each communication round, such as quantizing the parameter to consume fewer bits. For instance, \cite{koloskova2019decentralizedcon} proposes the communication-efficient decentralized SGD for convex problems, while \cite{tang2018communication,koloskova2019decentralizednonconv} studies the convergence rate of  the communication-efficient decentralized SGD for non-convex problems. However, all these approaches only focus on the regular decentralized SGD, ignoring the more challenging adaptive decentralized training approaches. 

\vspace{-3pt}
\section{Preliminary Knowledge}
\vspace{-3pt}
\subsection{Problem Setup}
\vspace{-3pt}
In this paper, we focus on the decentralized training approach with the adaptive learning rate to solve Eq.~(\ref{lossfunction}). Specifically, each worker node optimizes the following subproblem: 
\begin{equation} \label{local_lossfunction}
\begin{aligned}
&	\min_{\mathbf{x}} f^{(k)}(\mathbf{x})  \triangleq \mathbb{E}_{\xi\sim\mathcal{D}^{(k)}} F^{(k)}(\mathbf{x}; \xi)  \ ,\\
\end{aligned}
\end{equation}
where $f^{(k)}(\mathbf{x})$ is the loss function on the $k$-th worker node.  Then, these worker nodes collaboratively learn Eq.~(\ref{lossfunction}).
In a decentralized training system,  there is no central server node. All worker nodes only need to communicate with their neighbor nodes. Formally,  the communication protocol is determined by a graph $\mathcal{G}=\{V, W\}$ where $V=[K]$ denotes the set of worker nodes and $W=[w_{ij}]\in \mathbb{R}^{K\times K}$ represents the connection between different worker nodes. Specifically, $w_{ij}>0$ indicates that the $i$-th worker node and the $j$-th worker node are connected so that they can communicate with each other, while $w_{ij}=0$ denotes that these two worker nodes are disconnected so that no communication is needed.  In addition, following \cite{koloskova2019decentralizedcon,koloskova2019decentralizednonconv,li2019communication}, the matrix $W$ satisfies the following properties.
\begin{definition} \label{graph}
	$W$ is symmetric and doubly stochastic, i.e. $W^T=W$, $W\mathbf{1}=\mathbf{1}$, and $\mathbf{1}^TW=\mathbf{1}^T$. In addition, its eigenvalues satisfy that $|\lambda_n|\leq \cdots \leq |\lambda_2|\leq |\lambda_1|=1$ and it has the spectral gap $\rho=1-|\lambda_2| \in (0, 1]$.
\end{definition}

\vspace{-3pt}
\subsection{Notations}
\vspace{-3pt}
In this paper, the vector is represented by the lowercase bold letter and the matrix is represented by the uppercase letter.  
In addition, for a vector $\mathbf{x}_t\in \mathbb{R}^d$, we use  $\mathbf{x}_{t, j}$ or  $[\mathbf{x}_t]_j$ to represent the $j$-th coordinate, and $\sqrt{ \mathbf{x}_t}$ to represent the element-wise square root. For two vectors $\mathbf{x}_t\in \mathbb{R}^d$ and $\mathbf{y}_t\in \mathbb{R}^d$, $\frac{\mathbf{x}_t}{\mathbf{y}_t}$ denotes the element-wise division (assume all elements are non-zero). $\circ$ denotes  the element-wise product.
Moreover, $\mathbf{x}_t^{(k)}$ is the model parameter held by the $k$-th worker node at the $t$-th iteration.
$\bar{\mathbf{x}}_t = \frac{1}{K}\sum_{k=1}^{K}\mathbf{x}_t^{(k)}$ is the averaged model parameter at the $t$-th iteration.
$\mathbf{g}_t^{(k)} \triangleq \nabla F^{(k)}(\mathbf{x}_t^{(k)}; \xi_t^{(k)} ) $ is the stochastic gradient of the $k$-th worker node with respect to $\mathbf{x}_t^{(k)}$. 
$\nabla f^{(k)}(\mathbf{x}_t^{(k)}) = \mathbb{E}_{\xi\sim\mathcal{D}^{(k)}} \nabla F^{(k)}(\mathbf{x}_t^{(k)}; \xi_t^{(k)}) $ is the full gradient of  the $k$-th worker node with respect to $\mathbf{x}_t^{(k)}$. 
$\nabla f(\mathbf{x}_t)=\frac{1}{K}\sum_{k=1}^{K}\nabla f^{(k)}(\mathbf{x}_t^{(k)}) $ is the full gradient at  the $t$-th iteration.  
$f_*$ represents the minimum value of Eq.~(\ref{lossfunction}).
$p$ denotes the communication period.

\vspace{-3pt}
\section{Adaptive Decentralized Training Approach}

\vspace{-3pt}
\subsection{Decentralized Adam}
\vspace{-3pt}

In Alg.~\ref{decentralized_adam}, we propose the decentralized Adam, which employs the adaptive learning rate for each worker node in a decentralized training system. In detail, as shown in Alg.~\ref{decentralized_adam}, there are two components. In the local computation component (Line 3-6),  like Adam \cite{kingma2014adam}, each worker node computes the stochastic gradient $\mathbf{g}_t^{(k)}$ based on its own data and updates its intermediate  model parameter $\mathbf{x}_{t+\frac{1}{2}}^{(k)}$ as follows:
\begin{equation}
\mathbf{x}_{t+\frac{1}{2}}^{(k)}=\mathbf{x}_{t}^{(k)} - \eta\frac{\mathbf{m}_t^{(k)}}{\sqrt{\mathbf{v}_{t}^{(k)} } + \tau} \ ,
\end{equation}
where $\tau>0$ avoids the zero denominator, $\eta$ is the initial learning rate which is adjusted by $\sqrt{\mathbf{v}_{t}^{(k)}}$. $\mathbf{v}_{t}^{(k)} $ and $\mathbf{m}_t^{(k)}$ are computed  by using the historical gradient  as shown in Line 4-5 of Alg.~\ref{decentralized_adam}.  In the communication component (Line 7-11), the worker nodes conduct communication with neighbor nodes at every $p$ iterations. Specifically, when $\text{mod}(t+1, p)=0$, each worker node updates its model parameter by aggregating the intermediate model parameter of its neighbors as follows:
\begin{equation}
\mathbf{x}_{t+1}^{(k)} =\sum_{j\in \mathcal{N}_k}w_{kj}\mathbf{x}_{t+\frac{1}{2}}^{(j)} \ ,
\end{equation}
where $\mathcal{N}_k$ denotes the neighbor nodes of the $k$-th node.  
When $\text{mod}(t+1, p)\neq0$, the communication is skipped so that  $\mathbf{x}_{t+1}^{(k)} =\mathbf{x}_{t+\frac{1}{2}}^{(k)}$. It can be seen that if $p=1$, the communication will be conducted at each iteration. If $p>1$, some communication rounds are skipped. In other words, each worker node conducts multiple local computation and then performs communication. Hence, a large $p$ indicates the less communication frequency. 

\setlength{\textfloatsep}{2pt}
\begin{algorithm}
	\caption{Decentralized Adam (D-Adam)}
	\label{decentralized_adam}
	\begin{multicols}{2}
		\begin{algorithmic}[1]
			\REQUIRE $\mathbf{x}_{0}^{(k)}$, $1>\tau>0$, $p\geq1$, $\eta>0$, $\beta_1 \in [0, 1]$, $\beta_2\in [0, 1]$, $W$.  (same for all workers)
			\STATE For all workers $k$, do:
			\FOR{$t=0,\cdots, T-1$}
			
			\STATE Compute gradient $\mathbf{g}_t^{(k)}=\nabla F(\mathbf{x}_{t}^{(k)}; \xi_{t}^{(k)})$ 
			\STATE  $\mathbf{m}_{t}^{(k)} = \beta_1\mathbf{m}_{t-1}^{(k)} + (1-\beta_1) \mathbf{g}_t^{(k)}$ 
			\STATE  $\mathbf{v}_{t}^{(k)} = \beta_2\mathbf{v}_{t-1}^{(k)} + (1-\beta_2)\mathbf{g}_t^{(k)}\circ \mathbf{g}_t^{(k)}$ 
			\STATE $\mathbf{x}_{t+\frac{1}{2}}^{(k)}=\mathbf{x}_{t}^{(k)} - \eta\frac{\mathbf{m}_t^{(k)}}{\sqrt{\mathbf{v}_{t}^{(k)} } + \tau}$
			\IF {mod($t+1$, $p$)=0}
			\STATE $\mathbf{x}_{t+1}^{(k)} =\sum_{j\in \mathcal{N}_k}w_{kj}\mathbf{x}_{t+\frac{1}{2}}^{(j)}$
			\ELSE 
			\STATE $\mathbf{x}_{t+1}^{(k)} =\mathbf{x}_{t+\frac{1}{2}}^{(k)}$
			\ENDIF
			\ENDFOR
		\end{algorithmic}
	\end{multicols}
\end{algorithm}

\vspace{-3pt}
\subsection{Decentralized Adam with Compressed Communication}
\vspace{-3pt}

In Alg.~\ref{decentralized_adam}, different workers communicate the full-precision model parameter with each other. When the model size is large, the communication will become the bottleneck. To address this issue, we further propose the decentralized Adam with compressed communication, which compresses the communicated parameter in each communication round. The details are shown in Alg.~\ref{decentralized_adam_com}. In particular, we incorporate the following compression operator to compress the communicated model parameter. 
\begin{definition} \label{compress_op}
	For a compression operator $Q:  \mathbb{R}^d \rightarrow \mathbb{R}^d$, if it satisfies 
	\begin{equation}
	\|\mathbf{x}- Q(\mathbf{x})\|^2\leq (1-\delta) \|\mathbf{x}\|^2 \ ,
	\end{equation}
	where $0<\delta\leq 1$, $Q$ is  $\delta$-contraction. 
\end{definition}
This definition covers the sparsification operator and quantization operator. Both of them are usually used to reduce the communication cost in each communication round. 

\setlength{\textfloatsep}{2pt}
\begin{algorithm}
	\caption{Decentralized Adam with Compressed Communication (CD-Adam)}
	\label{decentralized_adam_com}
	\begin{multicols}{2}
		\begin{algorithmic}[1]
			\REQUIRE $\mathbf{x}_{0}^{(k)}$, $1>\tau>0$, $p\geq1$, $\eta>0$, $\beta_1 \in [0, 1]$, $\beta_2\in [0, 1]$, $W$.  (same for all workers)
			\STATE For all workers $k$, do:
			\FOR{$t=0,\cdots, T-1$}
			\STATE Compute gradient $\mathbf{g}_t^{(k)}=\nabla F(\mathbf{x}_{t}^{(k)}; \xi_{t}^{(k)})$
			\STATE  $\mathbf{m}_{t}^{(k)} = \beta_1\mathbf{m}_{t-1}^{(k)} + (1-\beta_1) \mathbf{g}_t^{(k)}$
			\STATE  $\mathbf{v}_{t}^{(k)} = \beta_2\mathbf{v}_{t-1}^{(k)} + (1-\beta_2)\mathbf{g}_t^{(k)}\circ \mathbf{g}_t^{(k)}$
			\STATE $\mathbf{x}_{t+\frac{1}{2}}^{(k)}=\mathbf{x}_{t}^{(k)} - \eta\frac{\mathbf{g}_t^{(k)}}{\sqrt{\mathbf{v}_{t}^{(k)} } + \tau}$
			\IF {mod($t+1$, $p$)=0}
			\STATE $\mathbf{x}_{t+1}^{(k)} =\mathbf{x}_{t+\frac{1}{2}}^{(k)} + \gamma\sum_{j\in \mathcal{N}_k}w_{kj}(\hat{\mathbf{x}}_{t}^{(j)} - \hat{\mathbf{x}}_{t}^{(k)})$
			\STATE $\mathbf{q}_{t}^{(k)} = Q(\mathbf{x}_{t+1}^{(k)}  - \hat{\mathbf{x}}_{t}^{(k)})$
			\STATE Send $\mathbf{q}_{t}^{(k)} $ and receive $\mathbf{q}_{t}^{(j)}$ for $j\in  \mathcal{N}_k$
			\STATE $\hat{\mathbf{x}}_{t+1}^{(j)} = \hat{\mathbf{x}}_{t}^{(j)} +\mathbf{q}_{t}^{(j)}$ for $j\in  \mathcal{N}_k$
			\ELSE
			\STATE $\mathbf{x}_{t+1}^{(k)} =\mathbf{x}_{t+\frac{1}{2}}^{(k)}$
			\STATE $\hat{\mathbf{x}}_{t+1}^{(j)} = \hat{\mathbf{x}}_{t}^{(j)} $ for $j\in  \mathcal{N}_k$
			\ENDIF
			\ENDFOR
		\end{algorithmic}
	\end{multicols}
\end{algorithm}

In Alg.~\ref{decentralized_adam_com}, there are also two components: the local computation component (Line 3-6) and the communication component (Line 7-15).  The  local computation component  is same with Alg.~\ref{decentralized_adam}, where each worker node computes its own learning rate from the historical gradient and updates  its own intermediate model parameter $\mathbf{x}_{t+\frac{1}{2}}^{(k)}$ which is shown in Line 3-6 of Alg.~\ref{decentralized_adam_com}. For the communication component, inspired by \cite{koloskova2019decentralizedcon}, when $\text{mod}(t+1, p)=0$, each worker node first updates its model parameter  as follows:
\begin{equation}
\mathbf{x}_{t+1}^{(k)} =\mathbf{x}_{t+\frac{1}{2}}^{(k)} + \gamma\sum_{j\in \mathcal{N}_k}w_{kj}(\hat{\mathbf{x}}_{t}^{(j)} - \hat{\mathbf{x}}_{t}^{(k)}) \ ,
\end{equation}
where $\gamma>0$ and $\hat{\mathbf{x}}_{t}^{(k)}$ is the auxiliary model parameter of the $k$-th node for efficient communication. Here, each worker node $k$ stores all $\hat{\mathbf{x}}_{t}^{(j)}$ of its neighbors $j\in \mathcal{N}_k$. Hence, in this step, no communication happens. After obtaining $\mathbf{x}_{t+1}^{(k)}$,  each worker node $k$ communicates $\mathbf{q}_{t}^{(k)} = Q(\mathbf{x}_{t+1}^{(k)}  - \hat{\mathbf{x}}_{t}^{(k)})$ with its neighbors and updates the auxiliary model parameter by $\hat{\mathbf{x}}_{t+1}^{(j)} = \hat{\mathbf{x}}_{t}^{(j)} +\mathbf{q}_{t}^{(j)}$ for all $j\in \mathcal{N}_k$, where $\mathbf{q}_{t}^{(k)}$ is used for controlling the noise introduced by the compression operator \cite{koloskova2019decentralizedcon}. Note that $\mathbf{q}_{t}^{(k)}$ is compressed so that the communication cost is less than that of Alg.~\ref{decentralized_adam}.  Thus, it is more  efficient in communication.  
When $\text{mod}(t+1, p)\neq0$, the communication is skipped. We set  $\mathbf{x}_{t+1}^{(k)} =\mathbf{x}_{t+\frac{1}{2}}^{(k)}$ and $\hat{\mathbf{x}}_{t+1}^{(j)} = \hat{\mathbf{x}}_{t}^{(j)} $ for $j\in  \mathcal{N}_k$ directly. 

In summary, we propose two adaptive training approaches for serverless learning. To the best of our knowledge, they are the first ones using adaptive learning rates in the decentralized training regime.

\section{Convergence Analysis}

\vspace{-3pt}
\subsection{Assumptions}
\vspace{-3pt}

\begin{assumption}\label{smooth}
	\textbf{(Smoothness)}: All the loss function on worker nodes are $L$-smooth, i.e.,
	\begin{equation}
	\|\nabla f^{(k)}(\mathbf{x}) - \nabla f^{(k)}(\mathbf{y})\| \leq L\|\mathbf{x}-\mathbf{y}\| , \quad \forall k\in [K], \forall \mathbf{x}\in \mathbb{R}^{d}, \forall \mathbf{y}\in \mathbb{R}^{d} \ .
	\end{equation}
\end{assumption}

\begin{assumption}\label{var1}
	\textbf{(Bounded gradient variance)}: For the loss function on each worker node, there exists $\sigma_{j}>0 \ (\forall j \in [d])$ such that
	\begin{equation}
	\mathbb{E} [|[\nabla F^{(k)}(\mathbf{x}; \xi )]_j - [\nabla f^{(k)}(\mathbf{x})]_j|^2] \leq \sigma_{j}^2, \quad \forall k\in [K], \forall \mathbf{x} \in \mathbb{R}^{d}, \forall j \in [d]\ .
	\end{equation}

\end{assumption}

\begin{assumption} \label{norm}
	\textbf{(Bounded gradient)}: For the loss function on each worker node, we assume there exists $G>0$ such that
	\begin{equation}
	|[ \nabla F^{(k)}(\mathbf{x}; \xi )]_j | \leq G, \quad \forall \mathbf{x} \in \mathbb{R}^{d}, \forall j \in [d], \forall k\in [K]\ .
	\end{equation}
\end{assumption}

\vspace{-3pt}
\subsection{Convergence Rate}
\vspace{-3pt}
\textbf{Challenges} The convergence analysis of Alg.~\ref{decentralized_adam} and Alg.~\ref{decentralized_adam_com} is more difficult than existing decentralized training approaches. In particular, our algorithms employ the adaptive learning rate that is \textit{coupled} with gradients, while existing decentralized approaches use the learning rate that is \textit{decoupled} with gradients. Hence, it is more challenging to study the convergence rate of our algorithms. Additionally, comparing with centralized approaches, the topology coupled with periodic and compressed communication makes the convergence analysis of our algorithms much more difficult. In the following, we present our algorithms' convergence rate and the proof can be found in Appendix~\ref{proof_theorem1} and~\ref{proof_theorem2}. 

\begin{theorem} \label{theorem1}
	For Algorithm~\ref{decentralized_adam}, under Assumption~\ref{smooth}--\ref{norm}, if  choosing  the initial learning rate $\eta<\frac{\tau^2}{3\sqrt{\beta_2}GL}$ and $0<\tau<1$,  we have
	\begin{equation}
	\begin{aligned}
	& \frac{1}{T}\sum_{t=0}^{T-1} \|\nabla  f(\bar{\mathbf{x}}_{t})\|^2 \leq 2(\sqrt{\beta_2}G + 1)\Bigg(\frac{f(\mathbf{x}_0) -   f_*}{\eta T}+  (1+ \frac{4}{\rho^2}) \frac{d\eta^2p^2G^2L^2}{\tau^3} \\
	& \quad \quad \quad \quad \quad \quad + ( G\sqrt{1-\beta_2} + \frac{\eta L}{2})\Big(\frac{3}{\tau^2}\sum_{j=1}^d\sigma_{j}^2+ (1+ \frac{4}{\rho^2}) \frac{6d\eta^2p^2G^2}{\tau^4} \Big)\Bigg)  \ . \\
	\end{aligned}
	\end{equation}
\end{theorem}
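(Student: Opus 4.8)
The plan is to run a descent argument on the averaged iterate $\bar{\mathbf{x}}_t = \frac{1}{K}\sum_k \mathbf{x}_t^{(k)}$. First I would exploit that $W$ is doubly stochastic (Definition~\ref{graph}), so averaging annihilates the mixing step: both the communication branch $\mathbf{x}_{t+1}^{(k)}=\sum_j w_{kj}\mathbf{x}_{t+\frac12}^{(j)}$ and the skip branch $\mathbf{x}_{t+1}^{(k)}=\mathbf{x}_{t+\frac12}^{(k)}$ leave the mean unchanged, giving the clean recursion
\begin{equation}
\bar{\mathbf{x}}_{t+1} = \bar{\mathbf{x}}_t - \frac{\eta}{K}\sum_{k=1}^K \frac{\mathbf{m}_t^{(k)}}{\sqrt{\mathbf{v}_t^{(k)}}+\tau} \ .
\end{equation}
Feeding this into the $L$-smoothness inequality (Assumption~\ref{smooth}) yields
\begin{equation}
\begin{aligned}
\mathbb{E} f(\bar{\mathbf{x}}_{t+1}) &\leq \mathbb{E} f(\bar{\mathbf{x}}_t) - \eta\, \mathbb{E}\Big\langle \nabla f(\bar{\mathbf{x}}_t), \tfrac{1}{K}\sum_{k}\tfrac{\mathbf{m}_t^{(k)}}{\sqrt{\mathbf{v}_t^{(k)}}+\tau}\Big\rangle \\
&\quad + \tfrac{L\eta^2}{2}\,\mathbb{E}\Big\|\tfrac{1}{K}\sum_k \tfrac{\mathbf{m}_t^{(k)}}{\sqrt{\mathbf{v}_t^{(k)}}+\tau}\Big\|^2 \ ,
\end{aligned}
\end{equation}
and the whole proof reduces to lower-bounding the inner-product term by $-\|\nabla f(\bar{\mathbf{x}}_t)\|^2$ up to controllable errors, and upper-bounding the quadratic term using the bounded gradient assumption.

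The crux is that the denominator $\sqrt{\mathbf{v}_t^{(k)}}+\tau$ depends on the very sample $\xi_t^{(k)}$ that generates $\mathbf{g}_t^{(k)}$, so I cannot directly take a conditional expectation to replace $\mathbf{g}_t^{(k)}$ by $\nabla f^{(k)}(\mathbf{x}_t^{(k)})$. The key step is to introduce the \emph{surrogate} denominator $\sqrt{\mathbf{v}_{t-1}^{(k)}}+\tau$, which is measurable before $\xi_t^{(k)}$ is drawn, and to split the adaptive step into the surrogate term plus a correction. Using $\mathbf{v}_t^{(k)}=\beta_2\mathbf{v}_{t-1}^{(k)}+(1-\beta_2)\mathbf{g}_t^{(k)}\circ\mathbf{g}_t^{(k)}$ together with $|[\mathbf{g}_t^{(k)}]_j|\leq G$ (Assumption~\ref{norm}), the denominator drift satisfies $|\sqrt{\mathbf{v}_t^{(k)}}-\sqrt{\mathbf{v}_{t-1}^{(k)}}|\leq \sqrt{1-\beta_2}\,G$, which is exactly where the $G\sqrt{1-\beta_2}$ factor in the statement comes from. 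After this decoupling the surrogate term has a well-behaved conditional expectation producing the descent in $\nabla f^{(k)}(\mathbf{x}_t^{(k)})$, while the gradient variance (Assumption~\ref{var1}) feeds the $\frac{3}{\tau^2}\sum_j\sigma_j^2$ contribution, the $\tau^{-2}$ arising from the uniform lower bound $\sqrt{\mathbf{v}_t^{(k)}}+\tau\geq\tau$.

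Separately I would establish a consensus lemma bounding $\frac{1}{K}\sum_k \mathbb{E}\|\mathbf{x}_t^{(k)}-\bar{\mathbf{x}}_t\|^2$, the drift of local models between communication rounds. Since $|[\mathbf{m}_t^{(k)}]_j|\leq G$, each local step displaces a worker by at most $O(\eta\sqrt{d}\,G/\tau)$, so over a window of at most $p$ skipped steps the accumulated displacement is $O(\eta p\sqrt{d}\,G/\tau)$; the spectral gap $\rho$ controls the geometric contraction of the disagreement under mixing, and combining the two gives a bound of order $(1+\frac{4}{\rho^2})\frac{d\eta^2 p^2 G^2}{\tau^2}$. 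This is the origin of the $p^2$ dependence, the $\rho$-dependence, and the two consensus terms; converting the consensus gap into a bound on $\|\nabla f^{(k)}(\mathbf{x}_t^{(k)})-\nabla f(\bar{\mathbf{x}}_t)\|$ via $L$-smoothness introduces the extra $L^2$ and $\tau$ powers seen in $(1+\frac{4}{\rho^2})\frac{d\eta^2 p^2 G^2 L^2}{\tau^3}$ and $(1+\frac{4}{\rho^2})\frac{6 d\eta^2 p^2 G^2}{\tau^4}$.

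Finally I would telescope the descent inequality over $t=0,\dots,T-1$, substitute the consensus bound, and collect terms; the condition $\eta<\frac{\tau^2}{3\sqrt{\beta_2}GL}$ guarantees that the coefficient of $\|\nabla f(\bar{\mathbf{x}}_t)\|^2$ remains bounded away from zero once the quadratic $\frac{L\eta^2}{2}$ term and the denominator-drift corrections are absorbed, so dividing by $T$ and by this coefficient yields the stated prefactor $2(\sqrt{\beta_2}G+1)$ (the reciprocal of a uniform lower bound on the effective adaptive learning rate, which follows from bounding $\sqrt{\mathbf{v}_t^{(k)}}+\tau$ via Assumption~\ref{norm} and $\tau<1$, together with the factor of two from the rearrangement). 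I expect the \emph{main obstacle} to be the decoupling in the second paragraph: simultaneously handling the statistical dependence between each worker's adaptive denominator and its stochastic gradient, and tracking how the distinct per-worker denominators interact with the periodic, topology-constrained averaging. This is precisely the new difficulty relative to decentralized SGD (decoupled learning rate) and to centralized Adam (single denominator, no consensus error), and it is where the bulk of the technical care must go.
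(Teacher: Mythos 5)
Your proposal follows essentially the same route as the paper's proof: the mean-preserving recursion for $\bar{\mathbf{x}}_t$, the $L$-smoothness descent step, decoupling each worker's stochastic gradient from its adaptive denominator via a predictable surrogate (the paper uses $\sqrt{\beta_2\mathbf{v}_{t-1}^{(k)}}+\tau$ rather than your $\sqrt{\mathbf{v}_{t-1}^{(k)}}+\tau$, but the idea and the resulting $G\sqrt{1-\beta_2}$ drift term are the same), a consensus lemma of exactly the form you describe yielding the $(1+\frac{4}{\rho^2})\eta^2p^2$ terms, and absorption of the residual $\|\nabla f(\bar{\mathbf{x}}_t)\|^2$ contributions via the condition $\eta<\frac{\tau^2}{3\sqrt{\beta_2}GL}$, with the prefactor $2(\sqrt{\beta_2}G+1)$ coming from lower-bounding the preconditioner by $\sqrt{\beta_2}G+\tau$ and $\tau<1$. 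The only cosmetic difference is that the paper explicitly sets $\beta_1=0$ (so $\mathbf{m}_t^{(k)}=\mathbf{g}_t^{(k)}$) before running the argument you outline, whereas your sketch leaves the momentum term implicit.
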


\begin{corollary} \label{cor_1}
	For Algorithm~\ref{decentralized_adam}, under Assumption~\ref{smooth}--\ref{norm}, if choosing $\eta=O(\frac{\sqrt{K}}{\sqrt{T}})$,  $\tau=O(\frac{K^{1/2}}{L})$, and $p=O(\frac{T^{1/4}}{K^{c}})$  where $c\geq0$, for sufficiently large $T$, we have
	\begin{equation}
	\frac{1}{T}\sum_{t=0}^{T-1} \|\nabla  f(\bar{\mathbf{x}}_{t})\|^2 \leq O(\frac{1}{\sqrt{KT}}) + O( (1+ \frac{4}{\rho^2}) \frac{1}{K^{2c}\sqrt{KT}}) + O((1+ \frac{4}{\rho^2}) \frac{1}{K^{2c+\frac{1}{2}}T})
	\end{equation}
\end{corollary}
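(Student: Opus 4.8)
The plan is to derive Corollary~\ref{cor_1} directly from the bound in Theorem~\ref{theorem1} by substituting the prescribed schedules $\eta=\Theta(\sqrt{K}/\sqrt{T})$, $\tau=\Theta(\sqrt{K}/L)$, and $p=\Theta(T^{1/4}/K^{c})$, and then reading off the growth order of each summand in powers of $K$, $T$, $\rho$, and $c$. Before substituting, I would first check that these choices are admissible, i.e. that they meet the hypotheses of Theorem~\ref{theorem1} once $T$ is large. The stepsize condition $\eta<\tau^2/(3\sqrt{\beta_2}GL)$ becomes, after substitution, a comparison between a left-hand side of order $\sqrt{K}/\sqrt{T}$ and a right-hand side of order $K/(GL^3)$ that is constant in $T$; since the former tends to $0$ while the latter is fixed, the inequality holds for all sufficiently large $T$, which is exactly the role played by the phrase ``for sufficiently large $T$'' in the statement. (The side condition $0<\tau<1$ is read as fixing the hidden constant in $\tau=\Theta(\sqrt{K}/L)$ small enough.)

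Next I would bound the four contributions in the Theorem~\ref{theorem1} bound one by one, keeping the $O(1)$ prefactor $2(\sqrt{\beta_2}G+1)$ aside. The optimization term gives $\frac{f(\mathbf{x}_0)-f_*}{\eta T}=O(1/\sqrt{KT})$ since $\eta T=\Theta(\sqrt{KT})$. The consensus/skip term contributes $\frac{d\eta^2p^2G^2L^2}{\tau^3}$, and plugging in the schedules yields $\eta^2p^2/\tau^3=\Theta\big((K/T)(T^{1/2}K^{-2c})(L^3K^{-3/2})\big)=\Theta\big(K^{-1/2-2c}T^{-1/2}\big)$, i.e. $O\big(1/(K^{2c}\sqrt{KT})\big)$, carrying the factor $(1+4/\rho^2)$. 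For the last bracket I would treat the prefactor $G\sqrt{1-\beta_2}+\tfrac{\eta L}{2}$ as being of order $\eta$; then its product with $\frac{3}{\tau^2}\sum_j\sigma_j^2=\Theta(1/K)$ is of order $\eta/K=O(1/\sqrt{KT})$, and its product with $\frac{6d\eta^2p^2G^2}{\tau^4}$ gives $\eta^3p^2/\tau^4=\Theta\big(K^{-1/2-2c}T^{-1}\big)=O\big(1/(K^{2c+1/2}T)\big)$, again with the $(1+4/\rho^2)$ factor attached. Summing, the optimization term and the $\sigma_j^2$ term collapse into a single $O(1/\sqrt{KT})$, while the two consensus/skip terms reproduce the remaining two summands of the corollary, establishing the claim (and exhibiting the $O(1/\sqrt{KT})$ leading term as the source of the linear speedup in $K$).

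I expect the one genuine obstacle to be the treatment of the factor $G\sqrt{1-\beta_2}+\tfrac{\eta L}{2}$. The second piece is harmless, being $\Theta(\eta)$, but for a fixed $\beta_2\in(0,1)$ the first piece $G\sqrt{1-\beta_2}$ is a positive constant, and multiplying it by $\frac{3}{\tau^2}\sum_j\sigma_j^2=\Theta(1/K)$ would leave a residual $\Theta(1/K)$ term that does not decay in $T$, contradicting the stated rate. The corollary is therefore only consistent if $\beta_2$ is driven toward $1$ together with the horizon, specifically $\sqrt{1-\beta_2}=O(\eta)$, i.e. $1-\beta_2=O(K/T)$ (it suffices to take $1-\beta_2=O(1/T)$ since $K\ge1$); under this choice the whole prefactor is $O(\eta)$ and the arithmetic above goes through. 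I would make this scaling of $\beta_2$ explicit as a hypothesis, since it is implicitly required for the variance term to vanish; once the admissibility check and this $\beta_2$ scaling are in place, the remaining steps are routine order bookkeeping.
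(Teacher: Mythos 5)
Your route is the same as the paper's (the paper gives no explicit proof of Corollary~\ref{cor_1}; it is obtained exactly as you do, by substituting $\eta$, $\tau$, $p$ into Theorem~\ref{theorem1} and reading off orders), and your bookkeeping for the surviving terms is correct: $\frac{f(\mathbf{x}_0)-f_*}{\eta T}=O(\frac{1}{\sqrt{KT}})$, $\frac{d\eta^2p^2G^2L^2}{\tau^3}=O(\frac{1}{K^{2c}\sqrt{KT}})$, $\frac{\eta L}{2}\cdot\frac{3}{\tau^2}\sum_j\sigma_j^2=O(\frac{1}{\sqrt{KT}})$, and $\frac{\eta L}{2}\cdot\frac{6d\eta^2p^2G^2}{\tau^4}=O(\frac{1}{K^{2c+1/2}T})$, with the admissibility check $\eta<\frac{\tau^2}{3\sqrt{\beta_2}GL}$ holding for large $T$ as you argue. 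Your flagged obstacle is also genuine and not an artifact of your reading: with a fixed $\beta_2\in(0,1)$ (as in the paper's experiments, $\beta_2=0.999$), the term $2(\sqrt{\beta_2}G+1)\,G\sqrt{1-\beta_2}\cdot\frac{3}{\tau^2}\sum_{j=1}^d\sigma_j^2=\Theta(\frac{L^2}{K}\sum_j\sigma_j^2)$ is constant in $T$ and cannot be absorbed into any of the three summands of the corollary, all of which vanish as $T\to\infty$; the companion term $G\sqrt{1-\beta_2}\cdot\frac{6d\eta^2p^2G^2}{\tau^4}=O((1+\frac{4}{\rho^2})K^{-1-2c}T^{-1/2})$ is harmless, so the $\sigma^2$ piece is the unique offender, exactly as you say. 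The paper simply suppresses this term when passing from Theorem~\ref{theorem1} to the corollary; your explicit hypothesis $\sqrt{1-\beta_2}=O(\eta)$ (i.e.\ $1-\beta_2=O(K/T)$, or any assumption forcing $\sqrt{1-\beta_2}\,\sigma_j^2$ to decay like $1/\sqrt{KT}$) is what is needed to make the stated rate follow rigorously. So your proof is correct, follows the paper's approach, and is in fact more careful than the paper on this point; the gap you identify lies in the paper's statement, not in your argument.
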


\begin{remark}
	When $c>0$, the first term $O(\frac{1}{\sqrt{KT}}) $ in Corollary~\ref{cor_1} dominates the other terms.  It indicates a linear speedup with respect to the number of workers. The spectral gap $\rho$ of the topology graph only affects the higher-order terms in the convergence rate. This is consistent with the standard decentralized SGD. When $c=0$, the spectral gap $\rho$ affects the leading term of the convergence rate, slowing down the convergence. In other words, a small communication period $p$ (corresponding to a large $c$) can speed up the convergence. 
\end{remark}

\begin{theorem} \label{theorem2}
	For Algorithm~\ref{decentralized_adam_com}, under Assumption~\ref{smooth}--\ref{norm},  if  choosing  the initial learning rate $\eta<\frac{\tau^2}{3\sqrt{\beta_2}GL}$ and $0<\tau<1$, we have 
	
	\begin{equation}
	\begin{aligned}
	& \frac{1}{T}\sum_{t=0}^{T-1} \|\nabla  f(\bar{\mathbf{x}}_{t})\|^2 \leq    (\sqrt{\beta_2}G + 1)\Bigg(\frac{f(\mathbf{x}_0) -   f(\mathbf{x}_*) }{\eta T}+  (1+\frac{13448}{\rho^4\delta^2})\frac{4d\eta^2 p^2G^2L^2}{\tau^3}\\
	& \quad \quad \quad \quad \quad \quad  + ( G\sqrt{1-\beta_2} + \frac{\eta L}{2})\Big(\frac{3}{\tau^2}\sum_{j=1}^d\sigma_{j}^2 + (1+ \frac{13448}{\rho^4\delta^2}) \frac{24d\eta^2p^2G^2}{\tau^4}  \Big) \Bigg) \ . \\
	\end{aligned}
	\end{equation}
\end{theorem}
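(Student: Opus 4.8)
The plan is to follow the template of the proof of Theorem~\ref{theorem1}, tracking the averaged iterate $\bar{\mathbf{x}}_t=\frac{1}{K}\sum_{k=1}^K\mathbf{x}_t^{(k)}$ through a smoothness-based descent inequality and separately controlling the consensus error among workers. The single genuinely new ingredient relative to Theorem~\ref{theorem1} is the error introduced by the $\delta$-contraction compression operator $Q$ of Definition~\ref{compress_op}, which is what turns the mixing factor $1+\frac{4}{\rho^2}$ into $1+\frac{13448}{\rho^4\delta^2}$.

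First I would verify that the gossip/compression update in Lines~8--11 of Alg.~\ref{decentralized_adam_com} leaves the average unchanged: because $W$ is doubly stochastic (Definition~\ref{graph}), the correction $\gamma\sum_{j\in\mathcal{N}_k}w_{kj}(\hat{\mathbf{x}}_t^{(j)}-\hat{\mathbf{x}}_t^{(k)})$ telescopes to zero when summed over $k$, so $\bar{\mathbf{x}}_{t+1}=\bar{\mathbf{x}}_t-\eta\frac{1}{K}\sum_{k=1}^K\frac{\mathbf{g}_t^{(k)}}{\sqrt{\mathbf{v}_t^{(k)}}+\tau}$ holds whether or not a round is skipped. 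Applying $L$-smoothness (Assumption~\ref{smooth}) at $\bar{\mathbf{x}}_t$ yields a descent inequality; I would expand the inner product by inserting $\nabla f^{(k)}(\bar{\mathbf{x}}_t)$ and bound the coupled adaptive preconditioner using the coordinatewise bracket $\tau\le\sqrt{[\mathbf{v}_t^{(k)}]_j}+\tau$ together with $\sqrt{[\mathbf{v}_t^{(k)}]_j}\le G$, both consequences of Assumption~\ref{norm} since each $[\mathbf{v}_t^{(k)}]_j$ is an exponential average of squared gradients bounded by $G^2$. Controlling the change of this preconditioner across iterations via $\mathbf{v}_t^{(k)}\ge\beta_2\mathbf{v}_{t-1}^{(k)}$ is what produces the overall prefactor $\sqrt{\beta_2}G+1$, and the variance term $\frac{3}{\tau^2}\sum_j\sigma_j^2$ then comes from Assumption~\ref{var1} applied to $\mathbb{E}\|\mathbf{g}_t^{(k)}-\nabla f^{(k)}\|^2$ after dividing by the denominator lower bound $\tau$.

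The crux is bounding the consensus error $\frac{1}{K}\sum_k\|\mathbf{x}_t^{(k)}-\bar{\mathbf{x}}_t\|^2$. Following the error-feedback analysis of \cite{koloskova2019decentralizedcon}, I would set up a Lyapunov function that couples this consensus error with the compression-tracking error $\frac{1}{K}\sum_k\|\mathbf{x}_t^{(k)}-\hat{\mathbf{x}}_t^{(k)}\|^2$: the $\delta$-contraction of $Q$ contracts the tracking error, the spectral gap $\rho$ contracts the consensus error, and choosing the mixing weight $\gamma$ as a function of $\rho$ and $\delta$ makes the resulting two-dimensional linear recursion a contraction. The forcing of this recursion is the per-step displacement $\eta\mathbf{g}_t^{(k)}/(\sqrt{\mathbf{v}_t^{(k)}}+\tau)$, whose squared norm Assumption~\ref{norm} bounds by $\eta^2 dG^2/\tau^2$; accumulating this over a communication period of length $p$ introduces the $p^2$ factor and, after propagation through the coupled recursion, the scalings $\frac{d\eta^2p^2G^2L^2}{\tau^3}$ (from the $L\|\mathbf{x}^{(k)}-\bar{\mathbf{x}}\|$ gradient-difference term) and $\frac{d\eta^2p^2G^2}{\tau^4}$. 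The factor $1+\frac{13448}{\rho^4\delta^2}$ is precisely the amplification constant of this recursion: the fourth power of $\rho$ and the square of $\delta$ arise because compression stacks a second contraction layer on top of the gossip mixing, and the period-$p$ skipping squares the relevant mixing quantities.

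The step I expect to be the main obstacle is exactly this coupled consensus/compression recursion. Unlike the uncompressed case of Theorem~\ref{theorem1}, where a single geometric mixing argument gives $1+\frac{4}{\rho^2}$, here two error quantities evolve jointly and their cross-terms must be split by Young's inequality with weights tuned so that both rows of the recursion remain contractive for an admissible range of $\gamma$; getting the constants and the joint contraction right over a full period is the delicate part, and it is what inflates the denominator to $\rho^4\delta^2$ and the numerator to the explicit constant $13448$. Once this bound is established, I would substitute it into the descent inequality, telescope over $t=0,\dots,T-1$, divide by $T$, and use the stepsize restriction $\eta<\frac{\tau^2}{3\sqrt{\beta_2}GL}$ with $0<\tau<1$ to absorb the negative quadratic term in the gradient and isolate $\frac{1}{T}\sum_{t=0}^{T-1}\|\nabla f(\bar{\mathbf{x}}_t)\|^2$, arriving at the stated bound.
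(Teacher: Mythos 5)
Your proposal follows essentially the same route as the paper's proof: preservation of the average $\bar{\mathbf{x}}_{t+1}=\bar{\mathbf{x}}_t-\frac{\eta}{K}\sum_k\mathbf{g}_t^{(k)}/(\sqrt{\mathbf{v}_t^{(k)}}+\tau)$ via double stochasticity of $W$, the same smoothness descent inequality with the $T_1$/$T_2$-style handling of the adaptive preconditioner yielding the $\sqrt{\beta_2}G+1$ prefactor and the $\frac{3}{\tau^2}\sum_j\sigma_j^2$ variance term, and, as the key lemma, the coupled consensus/compression-tracking recursion in the style of Koloskova et al.\ with Young-tuned weights and a $\gamma(\rho,\delta)$ choice giving a contraction factor $\alpha=\frac{\rho^2\delta}{82}$, whence $\frac{2}{\alpha^2}=\frac{13448}{\rho^4\delta^2}$ and the $p^2$ forcing over each communication period. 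This is exactly the paper's argument (its Lemma~\ref{lemma_x_diff_2} plus the Theorem~\ref{theorem1} template), so the plan is correct and not a different approach.
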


\begin{remark}
	Compared with Theorem~\ref{theorem1}, it can be seen that the convergence rate of Alg.~\ref{decentralized_adam_com} has a worse dependence on the spectral gap $\rho$. Moreover, the compression introduces the dependence on the compression coefficient $\delta$.
\end{remark}

\begin{corollary} \label{cor_2}
	For Algorithm~\ref{decentralized_adam_com}, under Assumptions~\ref{smooth}--\ref{norm}, if choosing $\eta=O(\frac{\sqrt{K}}{\sqrt{T}})$,  $\tau=O(\frac{K^{1/2}}{L})$, and $p=O(\frac{T^{1/4}}{K^{c}})$  where $c\geq0$, for sufficiently large $T$, we have
	\begin{equation}
	\frac{1}{T}\sum_{t=0}^{T-1} \|\nabla  f(\bar{\mathbf{x}}_{t})\|^2 \leq O(\frac{1}{\sqrt{KT}}) + O( (1+ \frac{13448}{\rho^4\delta^2}) \frac{1}{K^{2c}\sqrt{KT}}) + O((1+ \frac{13448}{\rho^4\delta^2}) \frac{1}{K^{2c+\frac{1}{2}}T}) \ .
	\end{equation}
\end{corollary}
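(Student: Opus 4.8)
The plan is to reuse the two-part skeleton behind Theorem~\ref{theorem1} — a descent estimate on the averaged iterate together with a consensus-error bound — and to isolate the one place where compression alters the argument. I first collect the iterates into matrices $X_t=[\mathbf{x}_t^{(1)},\dots,\mathbf{x}_t^{(K)}]$ and $\hat{X}_t=[\hat{\mathbf{x}}_t^{(1)},\dots,\hat{\mathbf{x}}_t^{(K)}]$, so that the communication step of Algorithm~\ref{decentralized_adam_com} reads $X_{t+1}=X_{t+\frac12}+\gamma\hat{X}_t(W-I)$ with $\hat{X}_{t+1}=\hat{X}_t+Q(X_{t+1}-\hat{X}_t)$ (columnwise $Q$). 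Because $W$ is doubly stochastic (Definition~\ref{graph}), the correction $\gamma\sum_j w_{kj}(\hat{\mathbf{x}}_t^{(j)}-\hat{\mathbf{x}}_t^{(k)})$ sums to zero over the workers, so — whether or not a round fires — the average obeys the clean recursion
\begin{equation}
\bar{\mathbf{x}}_{t+1}=\bar{\mathbf{x}}_t-\frac{\eta}{K}\sum_{k=1}^K\frac{\mathbf{g}_t^{(k)}}{\sqrt{\mathbf{v}_t^{(k)}}+\tau}\ .
\end{equation}
This is the averaged dynamics already analysed for Algorithm~\ref{decentralized_adam}, except that $\mathbf{m}_t^{(k)}$ is replaced by $\mathbf{g}_t^{(k)}$; dropping the momentum removes one layer of bookkeeping and accounts for the smaller prefactor $(\sqrt{\beta_2}G+1)$ in place of $2(\sqrt{\beta_2}G+1)$, so the descent half of the proof transfers essentially verbatim.

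For that half I apply $L$-smoothness (Assumption~\ref{smooth}) to $f$ along $\bar{\mathbf{x}}_t\to\bar{\mathbf{x}}_{t+1}$ and then break the coupling between the adaptive denominator and the gradient. The device is to replace the data-dependent rate $1/(\sqrt{\mathbf{v}_t^{(k)}}+\tau)$ by the surrogate $1/(\sqrt{\mathbf{v}_{t-1}^{(k)}}+\tau)$, which is measurable before $\mathbf{g}_t^{(k)}$ is drawn, so that $\mathbb{E}[\mathbf{g}_t^{(k)}]=\nabla f^{(k)}(\mathbf{x}_t^{(k)})$ can be pulled through the inner product; the substitution error is controlled by $|\sqrt{\mathbf{v}_t^{(k)}}-\sqrt{\mathbf{v}_{t-1}^{(k)}}|\le\sqrt{1-\beta_2}\,G$ (the origin of the $G\sqrt{1-\beta_2}$ factor), and the bounded-gradient Assumption~\ref{norm} confines each denominator to $[\tau,\sqrt{\beta_2}G+\tau]$, producing both the prefactor and, through $\tau<1$, the powers of $\tau$. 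Converting the stochastic gradient into $\sum_j\sigma_j^2$ via Assumption~\ref{var1} and bounding $\|\nabla f^{(k)}(\mathbf{x}_t^{(k)})-\nabla f(\bar{\mathbf{x}}_t)\|\le L\|\mathbf{x}_t^{(k)}-\bar{\mathbf{x}}_t\|$, I reduce the per-step inequality to a form whose only remaining uncontrolled quantity is the consensus error $\frac{1}{K}\sum_k\|\mathbf{x}_t^{(k)}-\bar{\mathbf{x}}_t\|^2$.

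The genuinely new work — and the main obstacle — is bounding that consensus error under compressed and periodic communication, which is exactly where the factor $(1+\frac{13448}{\rho^4\delta^2})$, replacing $1+\frac{4}{\rho^2}$, originates. Following \cite{koloskova2019decentralizedcon}, I take a joint Lyapunov function combining the consensus error $\|X_t(I-\frac1K\mathbf{1}\mathbf{1}^\top)\|_F^2$ with the compression gap $\|X_t-\hat{X}_t\|_F^2$, and show that, with $\gamma$ tuned to $\rho$ and $\delta$, a single communication round contracts it by a factor $1-\Omega(\rho^2\delta)$; the weaker-than-gossip contraction together with the $\delta$-contraction of Definition~\ref{compress_op} is what degrades $\rho^2$ to $\rho^4\delta^2$ once the potential is unrolled. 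The periodic schedule adds a second layer: over the $p-1$ skipped rounds the potential only grows, driven by the bounded local moves $\|\eta\,\mathbf{g}_t^{(k)}/(\sqrt{\mathbf{v}_t^{(k)}}+\tau)\|\le\eta G/\tau$, so I bound this drift across one period (it supplies the $\eta^2p^2G^2/\tau^4$ scale) and chain the per-period ``grow-then-contract'' estimates into a uniform $\frac{1}{K}\sum_k\|\mathbf{x}_t^{(k)}-\bar{\mathbf{x}}_t\|^2\le(1+\frac{13448}{\rho^4\delta^2})\,O(\eta^2p^2G^2/\tau^4)$. The delicate point is that consensus error and compression error are entangled through $\hat{X}_t$ and neither contracts by itself; the weight between the two potential terms must be chosen so that the mixed cross-terms cannot overwhelm the contraction, and this balance must survive the skipped rounds — that is the crux.

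Finally I substitute the consensus bound into the per-step descent inequality, sum over $t=0,\dots,T-1$, telescope using $f(\bar{\mathbf{x}}_0)-f(\bar{\mathbf{x}}_T)\le f(\mathbf{x}_0)-f(\mathbf{x}_*)$, and divide by $T$; the stepsize restriction $\eta<\tau^2/(3\sqrt{\beta_2}GL)$ ensures the $\frac{L\eta^2}{2}$ curvature term is absorbed rather than dominating, which yields the stated bound. Corollary~\ref{cor_2} then follows by inserting $\eta=O(\sqrt{K}/\sqrt{T})$, $\tau=O(K^{1/2}/L)$, and $p=O(T^{1/4}/K^c)$ and tracking the powers of $K$.
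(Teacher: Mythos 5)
Your proposal matches the paper's proof essentially step for step: the same doubly-stochastic average recursion $\bar{\mathbf{x}}_{t+1}=\bar{\mathbf{x}}_t-\frac{\eta}{K}\sum_k \mathbf{g}_t^{(k)}/(\sqrt{\mathbf{v}_t^{(k)}}+\tau)$, the same $T_1/T_2$ surrogate-denominator descent argument carried over from Theorem~\ref{theorem1}, the same Koloskova-style joint Lyapunov function $\|X_t-\bar{X}_t\|_F^2+\|X_t-\hat{X}_t\|_F^2$ with per-round contraction $1-\alpha$, $\alpha=\rho^2\delta/82$, and grow-then-contract handling of the $p-1$ skipped rounds (which is exactly the source of $13448=2\cdot 82^2$ in $1+2/\alpha^2$), followed by the same telescoping and parameter substitution. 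One side remark in your write-up is wrong but harmless: the prefactor difference, $(\sqrt{\beta_2}G+1)$ versus $2(\sqrt{\beta_2}G+1)$, is not explained by dropping the momentum term --- the paper's Theorem~\ref{theorem1} analysis also sets $\beta_1=0$, and the identical final lower-bounding step $[\nabla f(\bar{\mathbf{x}}_t)]_j^2/(\sqrt{\beta_2 \mathbf{v}_{t-1,j}^{(k)}}+\tau)\geq [\nabla f(\bar{\mathbf{x}}_t)]_j^2/(\sqrt{\beta_2}G+1)$ produces the factor $2$ in both proofs, so the missing $2$ in Theorem~\ref{theorem2} is an inconsistency of the paper itself and is immaterial to the big-$O$ statement of the corollary.
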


\begin{remark}
	Similar with Alg.~\ref{decentralized_adam}, when $c>0$, Alg.~\ref{decentralized_adam_com} also obtains the linear speedup  with respect to the number of workers since 
	the first term $O(\frac{1}{\sqrt{KT}}) $ dominates the other terms.  Meanwhile, the spectral gap $\rho$ and  the compression coefficient  $\delta $ only affects the  higher order terms in the convergence rate.  When $c=0$, the convergence rate is slowed down by the spectral gap $\rho$ and  the compression coefficient  $\delta $. 
\end{remark}

\vspace{-3pt}
\section{Experiments}
\vspace{-3pt}

\subsection{Experimental Settings}
\vspace{-5pt}
In our experiments, we use three large-scale datasets, which are described as follows. 

\textbf{CIFAR-10} is an image classification dataset. It has 50,000 images for training set and 10,000 images for testing set. Here, we use this dataset to train ResNet20 \cite{he2016deep}. In particular, the number of epochs is 300. The initial learning rate $\eta$ is set to 0.001, and it is divided by 10 at epoch 150 and  225.
The weight decay is set to $10^{-4}$. The total batch size is set to 128.
\textbf{Criteo\footnote{    \urlstyle{same}\url{http://labs.criteo.com/2014/02/kaggle-display-advertising-challenge-dataset/}} }  is a benchmark advertising dataset for click-through rate (CTR) prediction. The task is to predict whether an advertisement will be clicked or not. There are about 45 million samples. Each sample has 2,086,936 features. Since most features in the advertisement data are categorical, factorization machine \cite{rendle2010factorization}, which learns an embedding vector for each feature, is usually used for CTR prediction. Therefore, the model size is usually large due to the embedding vectors for high-dimensional features and weight matrices.  Here, we use deep factorization machine (DeepFM)  \cite{guo2017deepfm}. In particular,  the embedding dimension of each feature is 10, and there are three fully connected layers (400-400-400). The dropout ratio is set to 0.5, and the initial learning rate $\eta$ is set to 0.001. The total batch size is set to 4096. The number of epochs is 20.
\textbf{Movielens-20M\footnote{\urlstyle{same}\url{https://grouplens.org/datasets/movielens/}}  } is movie recommendation dataset. It  consists of users' ratings for movies. There are about 20 million rating records for 27,000 movies given by 138,000 users. The task is to predict whether a user will give a positive rating for a movie or not. Here, we use the state-of-the-art method Wide\&Deep \cite{cheng2016wide} for this task. In detail, the feature is the concatenation of user's id and movie's id and  the target is the rating. Hence, the feature is also categorical and high-dimensional so that the model size is large. Wide\&Deep sets the embedding dimension of each feature to 10 and the dimension of fully connected layers to 400-400-400. We set the dropout ratio to 0.5, initial learning rate $\eta$ to 0.001, the total batch size to 4096, and the number of  epochs to  20.

In our experiments, we set $\beta_1=0.9$ and $\beta_2=0.999$ for all datasets. In addition, we randomly split samples by 9:1 for training and testing set for Criteo and Movielens-20M, while we use the default setting for CIFAR-10. To evaluate the performance, we use ACC (classification accuracy) for CIFAR-10 and AUC (Area Under the ROC curve) for Criteo and Movielens-20M.  At last, 
all experiments are done with  8 workers (GPUs), which are connected in a ring topology. Thus, the batch size for each worker is 1/8 of the total batch size.

\vspace{-10pt}
\begin{figure}[!htbp]
	\centering 
	\subfigure[CIFAR10]{
		\includegraphics[width=0.258\textwidth]{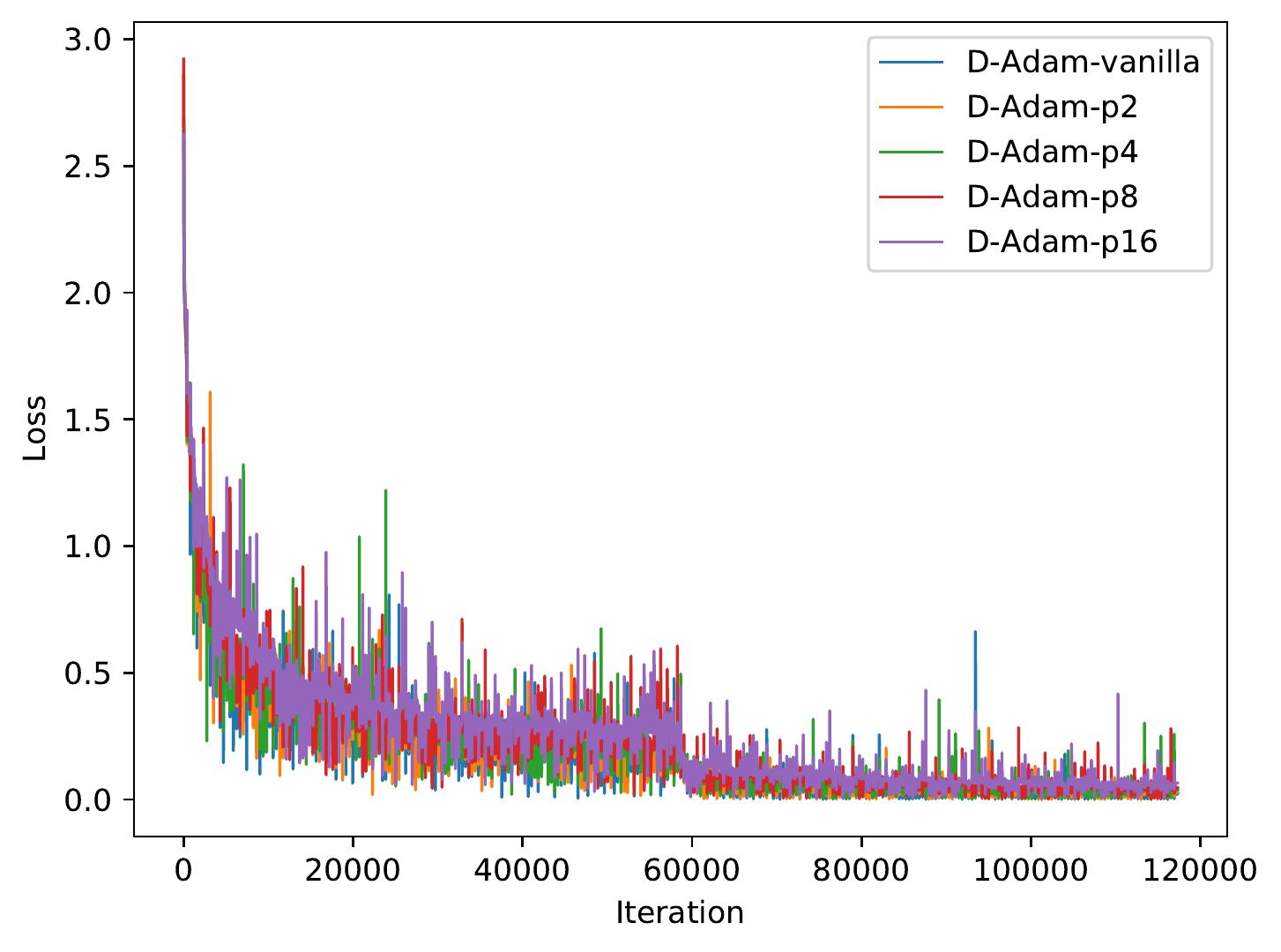}
		\label{train_loss_adam_cifar10}
	}
	\qquad
	\subfigure[Criteo]{
		\includegraphics[width=0.258\textwidth]{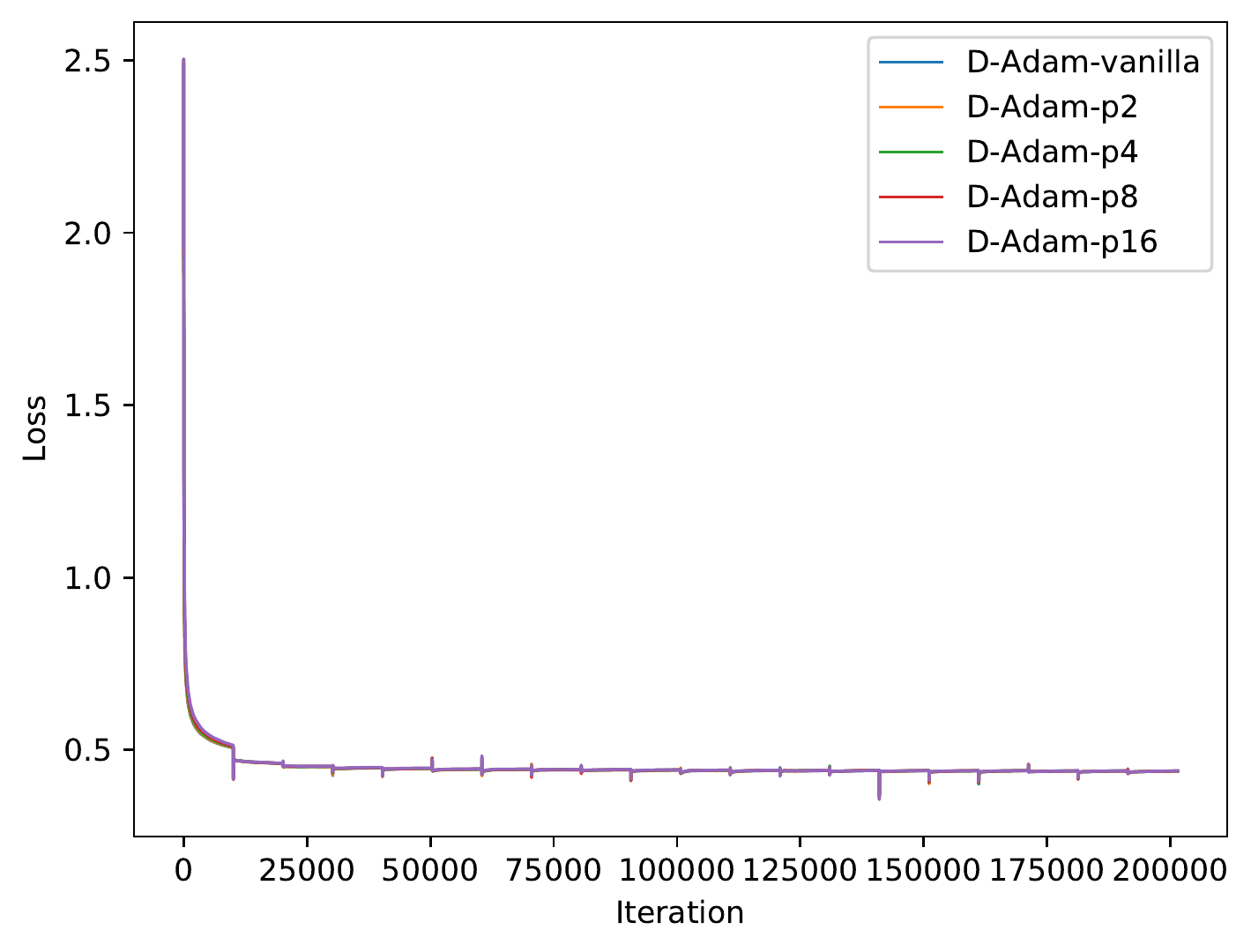}
		\label{train_loss_adam_criteo}
	}
	\qquad
	\subfigure[Movielens-20M]{
		\includegraphics[width=0.258\textwidth]{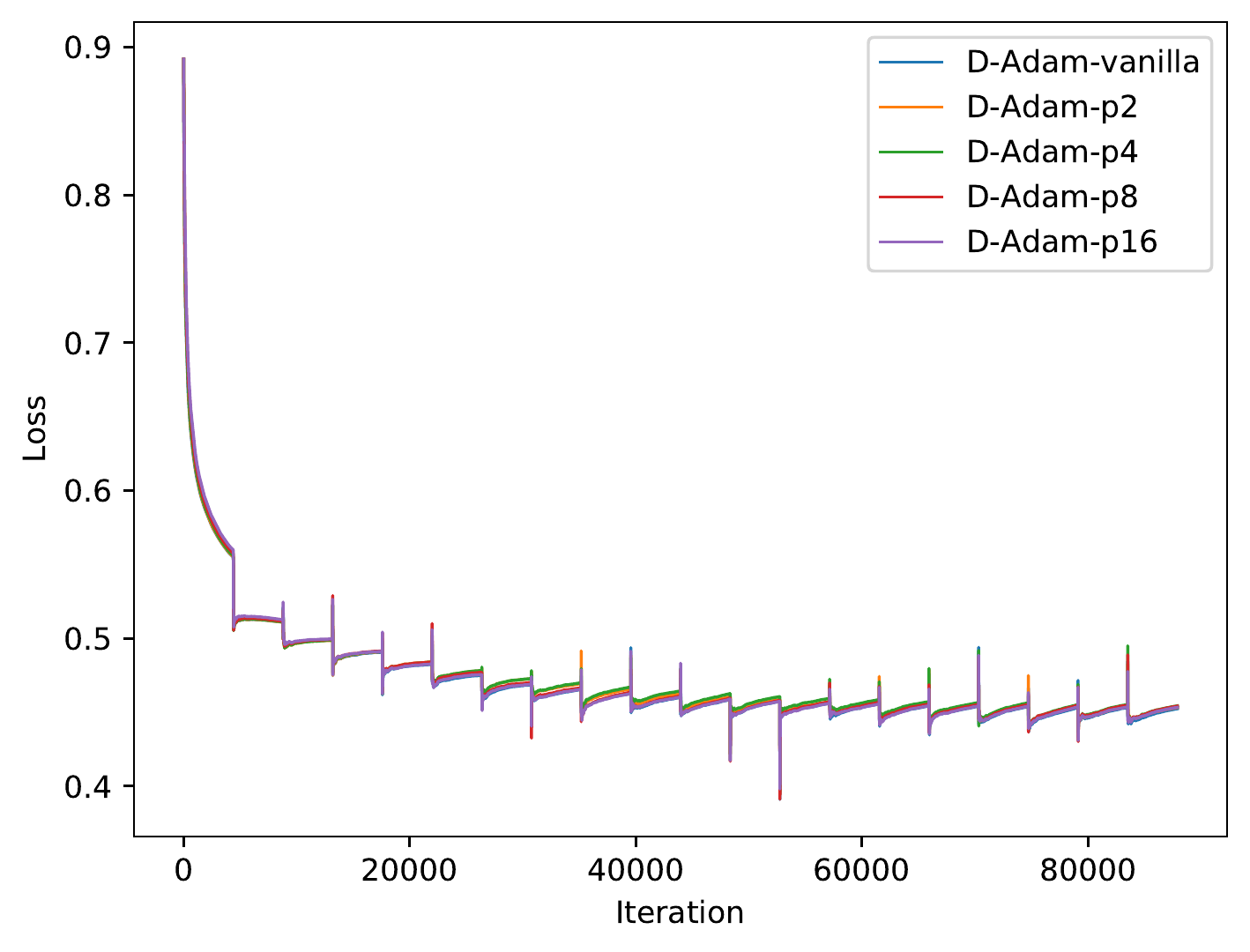}
		\label{train_loss_adam_avazu}
	}
	\vspace{-3pt}
	\caption{ The training performance of D-Adam w.r.t. iterations. }
	\label{convergence_curve_train_adam}
\end{figure}

\vspace{-15pt}
\begin{figure}[!htbp]
	\centering 
	\subfigure[CIFAR10]{
		\includegraphics[width=0.258\textwidth]{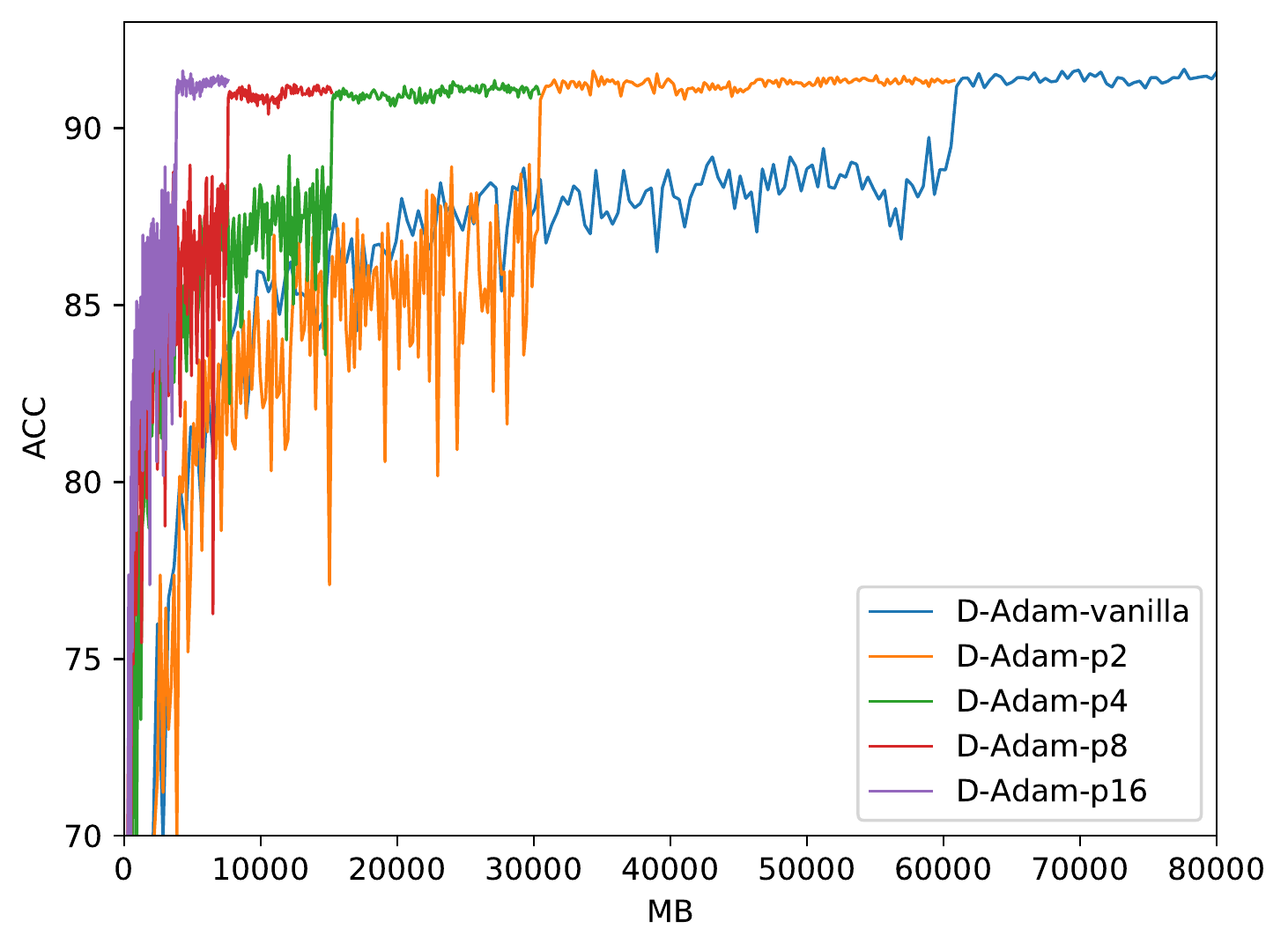}
		\label{test_acc_adam_cifar10}
	}
	\qquad
	\subfigure[Criteo]{
		\includegraphics[width=0.258\textwidth]{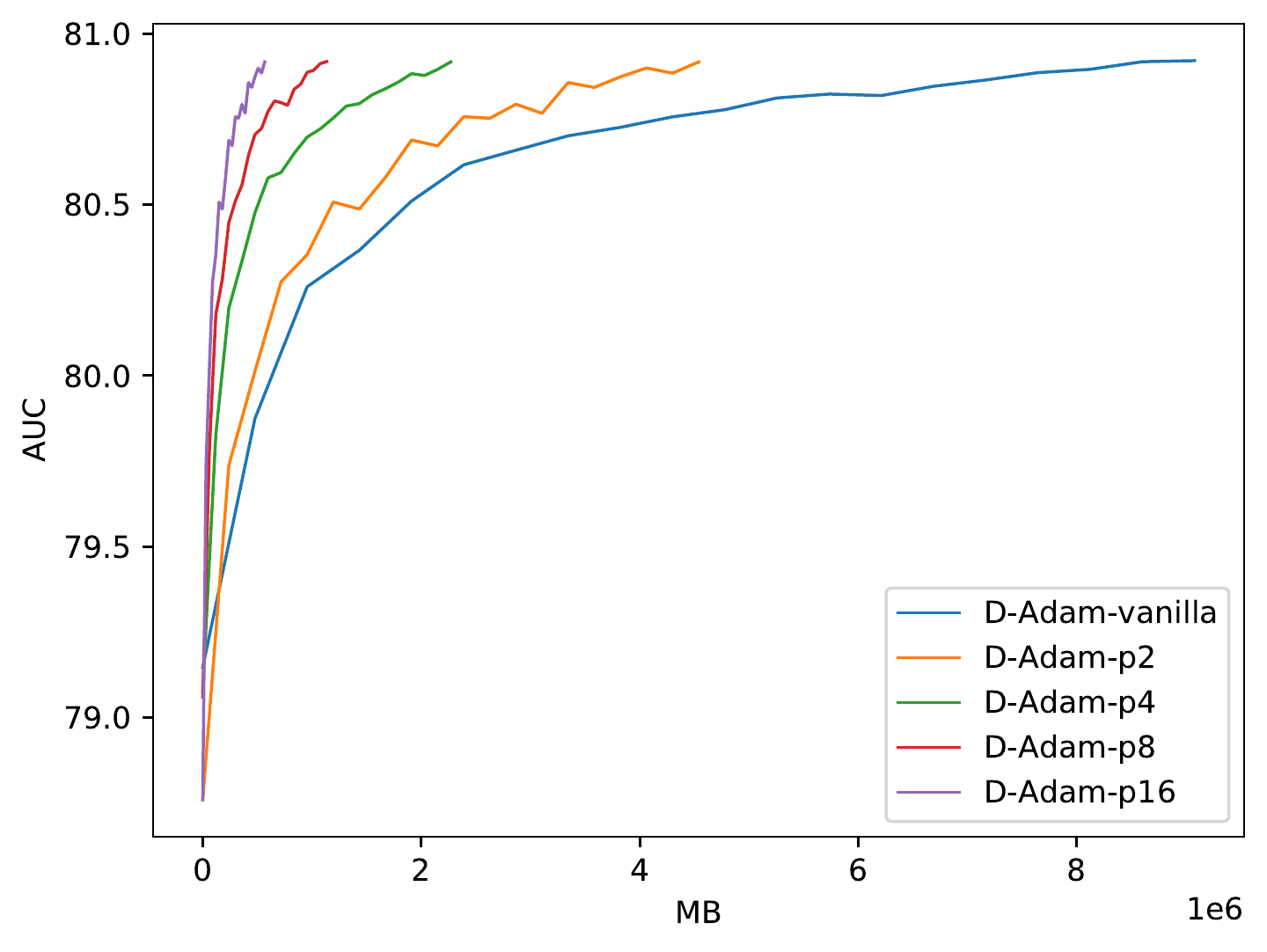}
		\label{test_loss_adam_criteo}
	}
	\qquad
	\subfigure[Movielens-20M]{
		\includegraphics[width=0.258\textwidth]{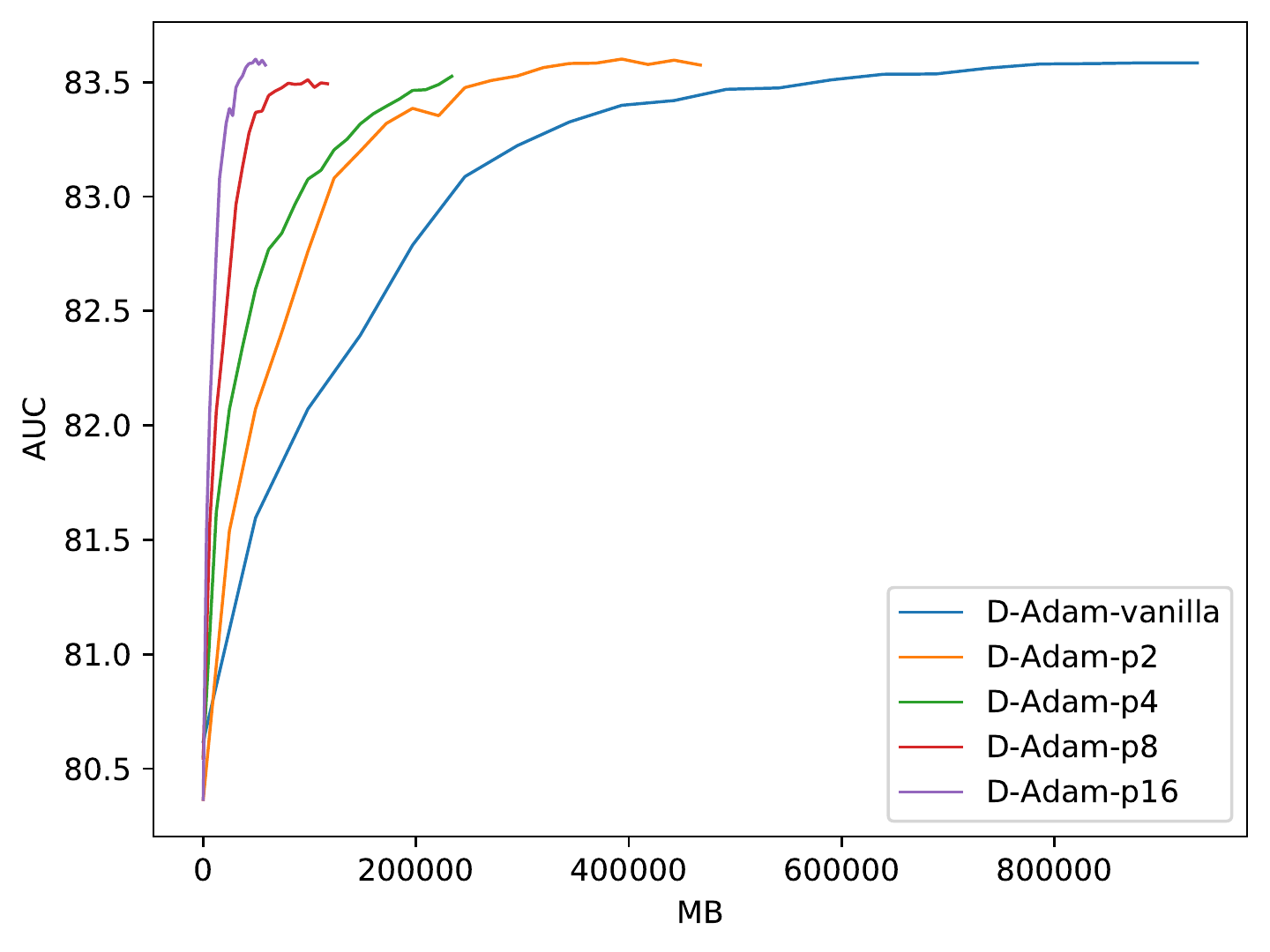}
		\label{test_loss_adam_avazu}
	}
	\vspace{-3pt}
	\caption{ The testing performance of D-Adam w.r.t. communication cost (MB). }
	\label{convergence_curve_test_adam}
\end{figure}

\vspace{-10pt}
\subsection{Experimental Results of Algorithm~\ref{decentralized_adam}}
\vspace{-5pt}
To verify the performance of our proposed D-Adam in Alg.~\ref{decentralized_adam}, we compare it with the vanilla decentralized Adam (D-Adam-vanilla). In particular, like decentralized parallel SGD (D-PSGD) \cite{lian2017can}, D-Adam-vanilla runs Adam in a decentralized  manner and conducts communication at each iteration. For our proposed D-Adam, we  use different communication periods: $p=2,4,8, 16$, and keep the other settings same as D-Adam-vanilla.
In Figure~\ref{convergence_curve_train_adam}, we plot the training loss with respect to the number of iterations, and the testing performance with respect to the communication cost  (MB) is shown in Figure~\ref{convergence_curve_test_adam}.  From these two figures, we have two observations. As for the training performance, our D-Adam algorithms with different communication periods converge to a very  similar  value with that of D-Adam-vanilla. As for the  testing performance, our D-Adam algorithms with different $p$ have almost the same ACC (or AUC) as D-Adam-vanilla. These observations confirm the correctness of our D-Adam. 
Moreover, the larger the communication period $p$ is, the less communication cost  D-Adam needs, which further confirms the communication efficiency of our Alg.~\ref{decentralized_adam}.

\vspace{-5pt}
\subsection{Experimental Results of Algorithm~\ref{decentralized_adam_com}}
\vspace{-5pt}
In this experiment, we conduct experiments to verify the performance of our proposed CD-Adam in Alg.~\ref{decentralized_adam_com}.  Here, 
the compression operator used in our  experiments is the sign operator \cite{bernstein2018signsgd}. The parameter $\gamma$ is set to 0.4.
In Figure~\ref{convergence_curve_train_choco_adam},  we compare the training performance of CD-Adam with that of D-Adam-vanilla. It can be seen that CD-Adam with different communication periods $p$ can converge to almost the same value as the full-precision D-Adam-vanilla, confirming the  correctness of our CD-Adam algorithm.  In Figure~\ref{convergence_curve_test_choco_adam}, we report the convergence result on the testing set w.r.t. the communication cost. Here, we compare CD-Adam with D-Adam whose communication period $p$ is 16 because it has less communication cost than other variants. It can be seen that, with the compressed communication technique, CD-Adam has less communication cost than D-Adam but has almost the same final testing performance with the full-precision D-Adam. In particular, when $p=16$, the difference between the communication cost of CD-Adam and that of D-Adam is very significant. Thus, we can conclude that our Alg.~\ref{decentralized_adam_com} is communication efficient and does not degrade the final testing performance, even though it employs both skipped communication and compressed communication. {More experimental  results can be found in Appendix~\ref{additional_exp}.}

\vspace{-10pt}
\begin{figure}[!htbp]
	\centering 
	\subfigure[CIFAR10]{
		\includegraphics[width=0.258\textwidth]{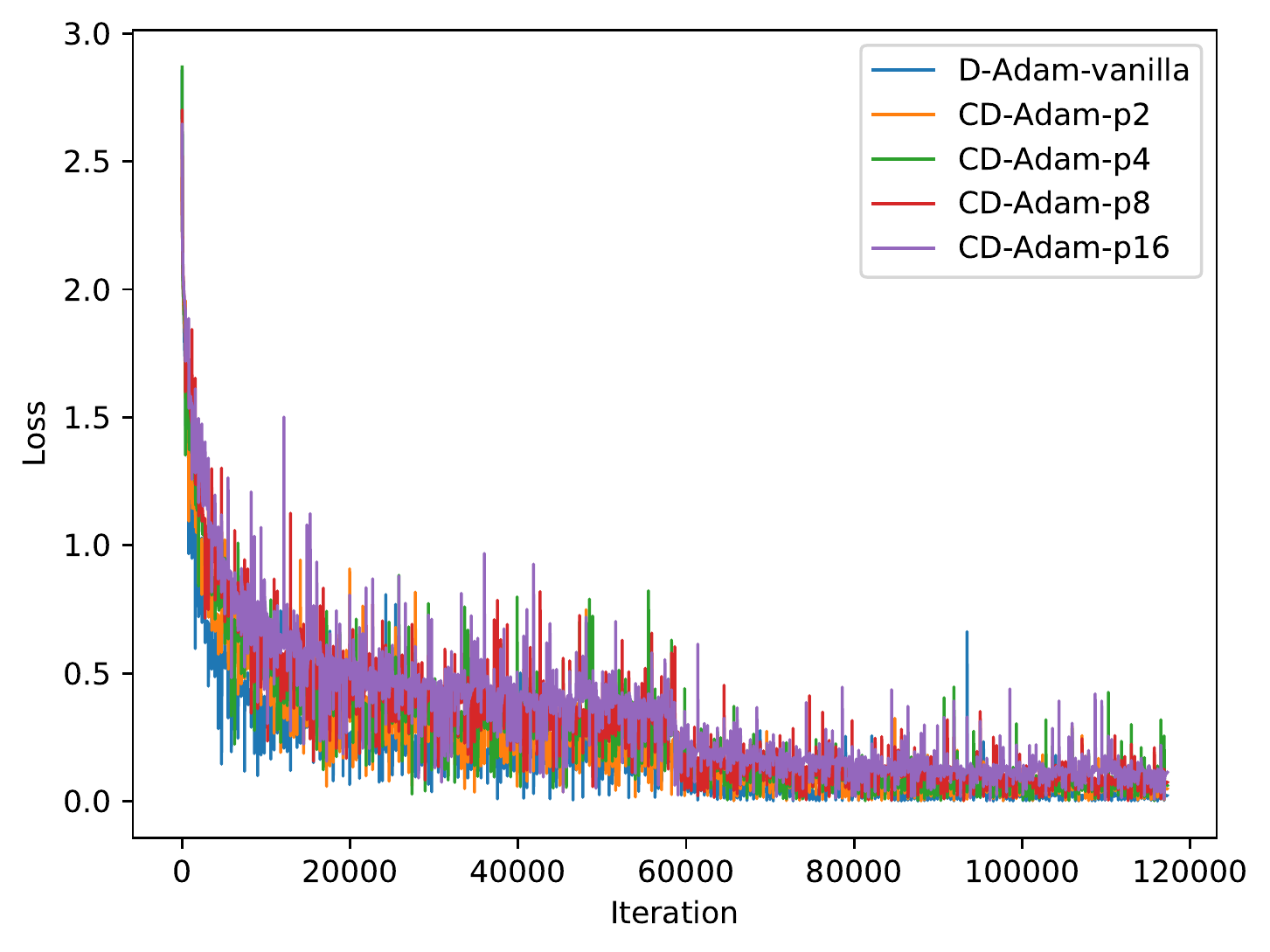}
		\label{train_loss_choco_adam_cifar10}
	}
	\qquad
	\subfigure[Criteo]{
		\includegraphics[width=0.258\textwidth]{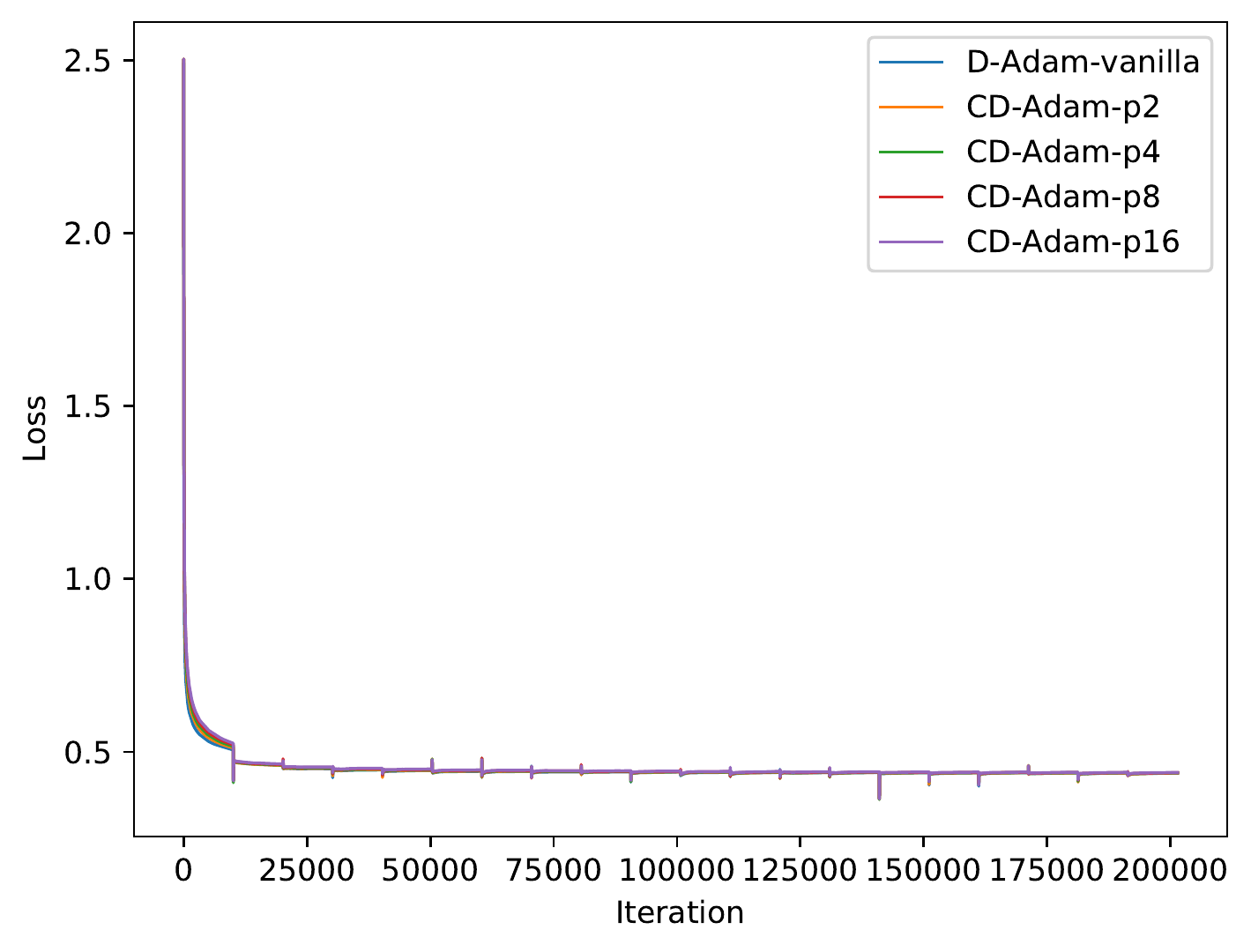}
		\label{train_loss_choco_adam_criteo}
	}
	\qquad
	\subfigure[Movielens-20M]{
		\includegraphics[width=0.258\textwidth]{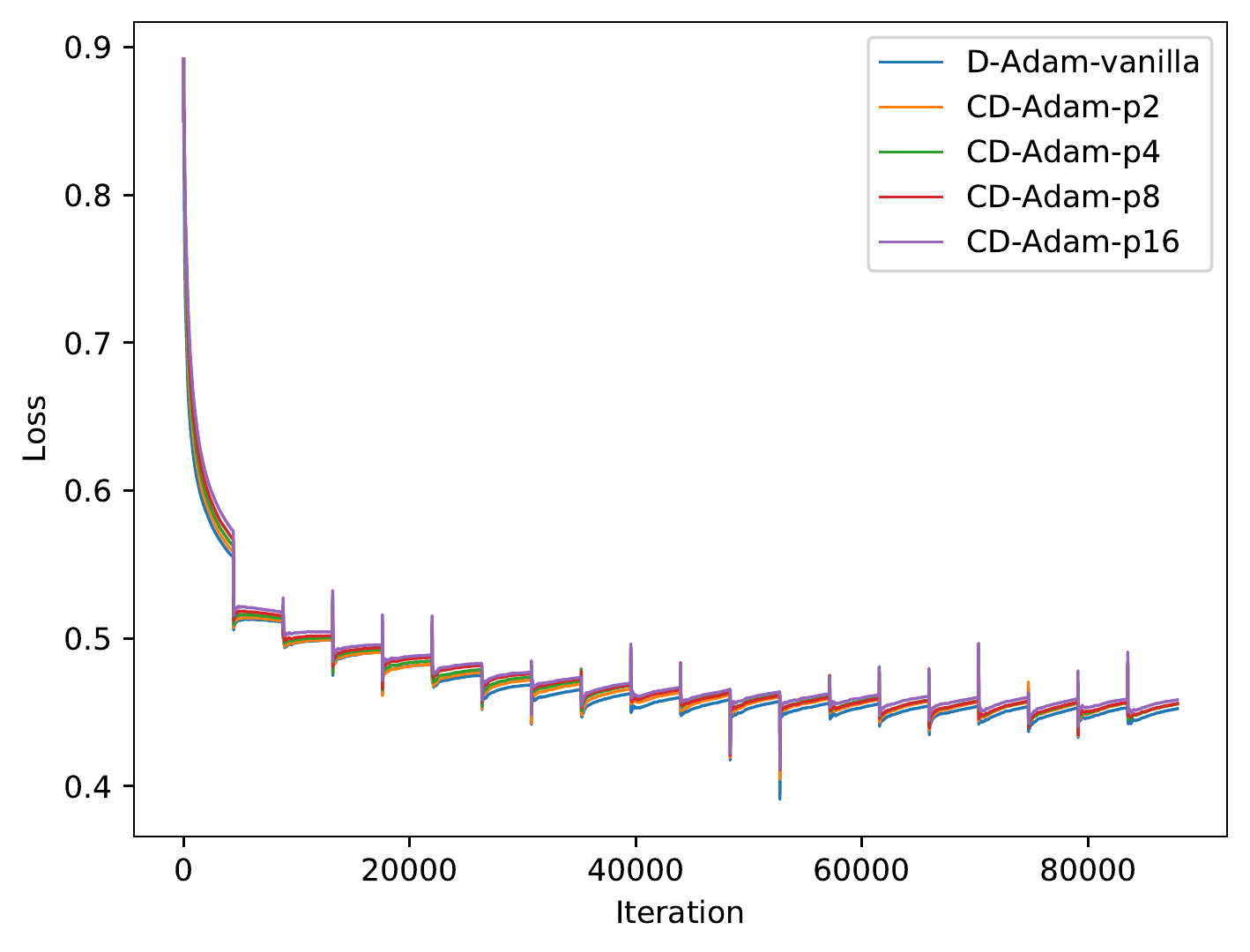}
		\label{train_loss_choco_adam_avazu}
	}
	\vspace{-3pt}
	\caption{  The training performance of CD-Adam w.r.t. iterations. }
	\label{convergence_curve_train_choco_adam}
\end{figure}

\vspace{-10pt}
\begin{figure}[!htbp]
	\vspace{-3pt}
	\centering 
	\subfigure[CIFAR10]{
		\includegraphics[width=0.258\textwidth]{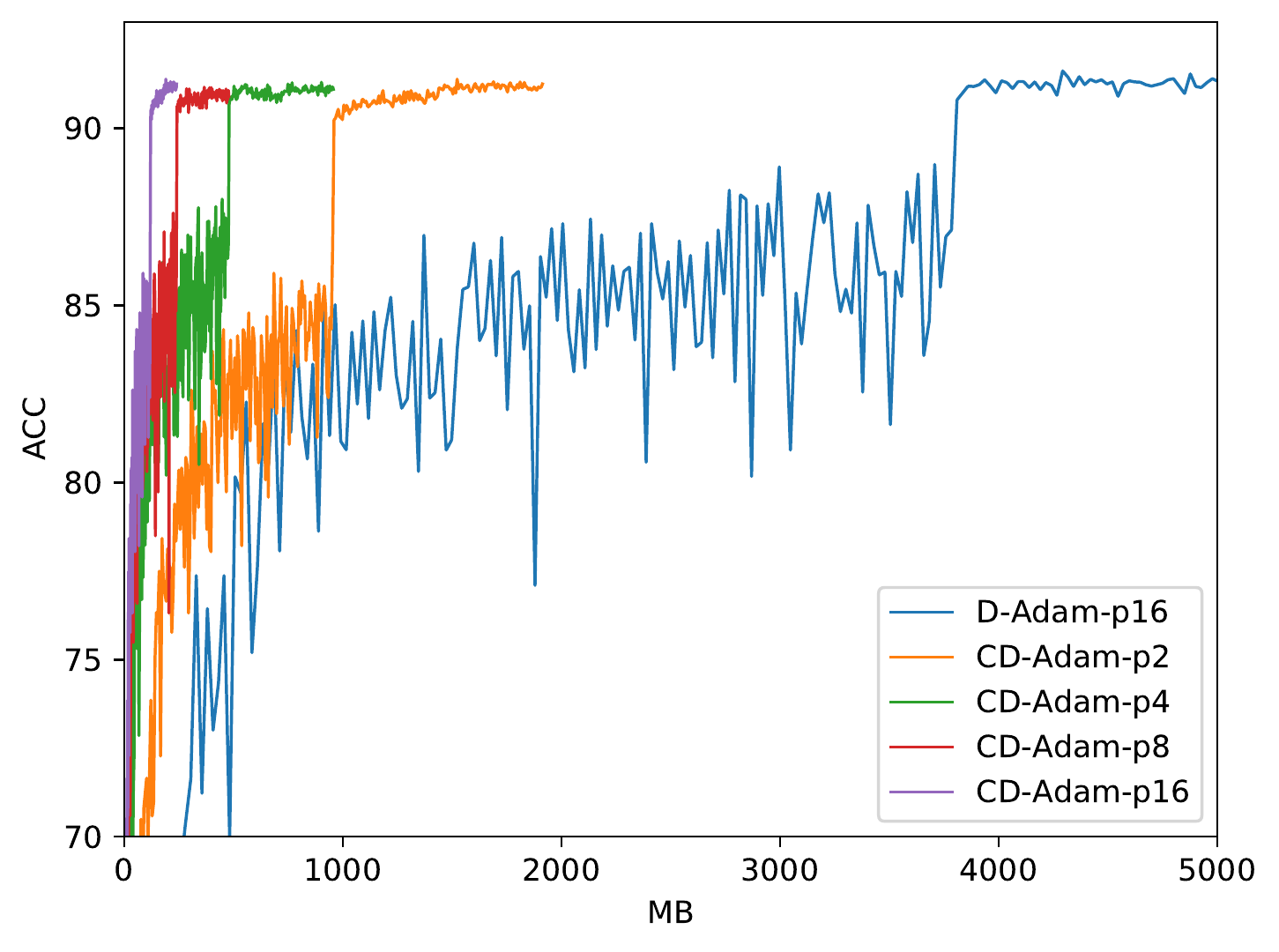}
		\label{test_acc_choco_adam_cifar10}
	}
	\qquad
	\subfigure[Criteo]{
		\includegraphics[width=0.259\textwidth]{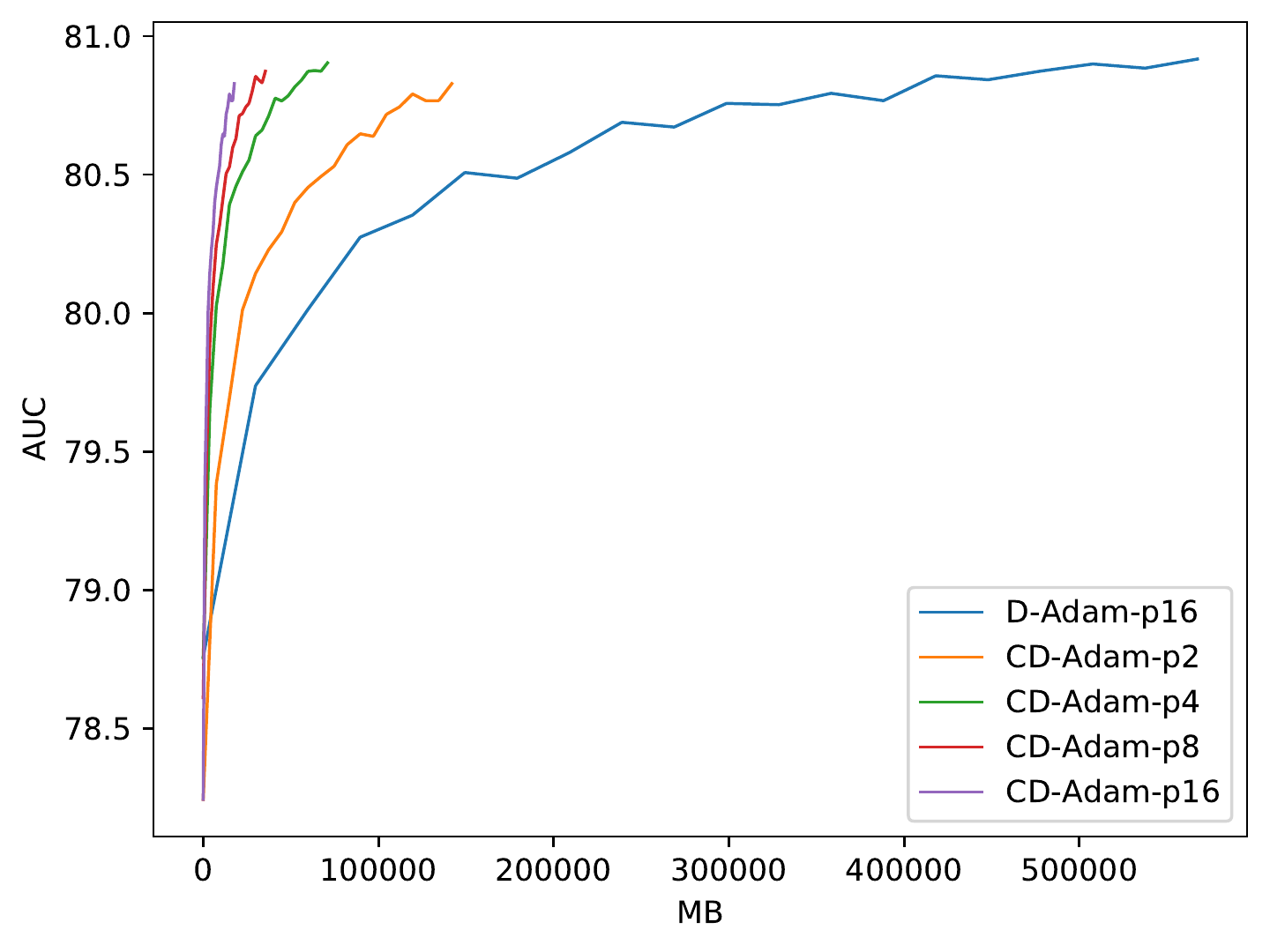}
		\label{test_loss_choco_adam_criteo}
	}
	\qquad
	\subfigure[Movielens-20M]{
		\includegraphics[width=0.258\textwidth]{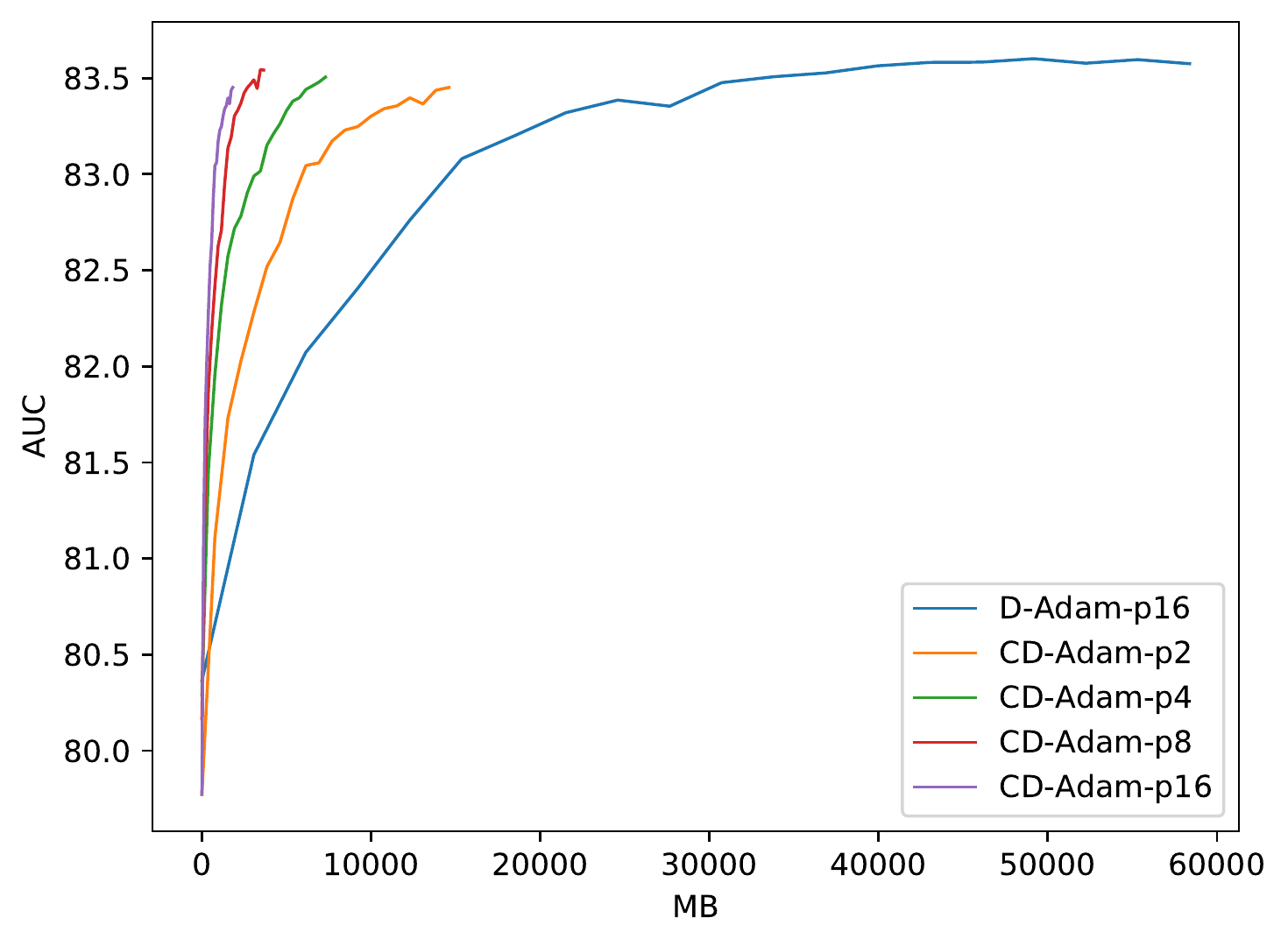}
		\label{test_loss_choco_adam_avazu}
	}
	\vspace{-3pt}
	\caption{ The testing performance of CD-Adam w.r.t. communication cost (MB). }
	\label{convergence_curve_test_choco_adam}
\end{figure}

\vspace{-10pt}
\section{Conclusion}
\vspace{-6pt}
In this paper, we proposed a  novel adaptive decentralized training approach. This is the first research work applying adaptive learning rates to decentralized training.  Moreover, we proposed a communication-efficient adaptive decentralized training algorithm to reduce the number of communication overhead. We also established the convergence rate of our  proposed two algorithms and disclosed when they can achieve linear speedup w.r.t. the number of workers. This is also the first research work obtaining these results. We further validated the proposed methods via various prediction tasks (image classification, click prediction, and recommendation prediction) on three benchmark datasets. All experimental results have verified the effectiveness of our proposed two algorithms.

\bibliographystyle{plain} 
\bibliography{bibfile}

\onecolumn
\section{Appendix}
\subsection{Additional  Experimental Results} \label{additional_exp}
In this subsection, we further show the testing performance of D-Adam and CD-Adam with respect to the number of epochs.  In particular, we compare the  testing performance of  our D-Adam with that of  D-Adam-vanilla in Figure~\ref{convergence_curve_test_adam_epoch}, and also compare  the  testing performance of  our CD-Adam with that of  D-Adam-vanilla in Figure~\ref{convergence_curve_test_choco_adam_epoch}.  From  these two figures, it can  be seen  that our proposed  D-Adam and CD-Adam have very similar final testing performance  with D-Adam-vanilla. In other words, the skipped communication and compressed communication do not hurt the testing performance of our algorithms. 

	\begin{figure}[!htbp]
	\centering 
	\subfigure[CIFAR10]{
		\includegraphics[width=0.258\textwidth]{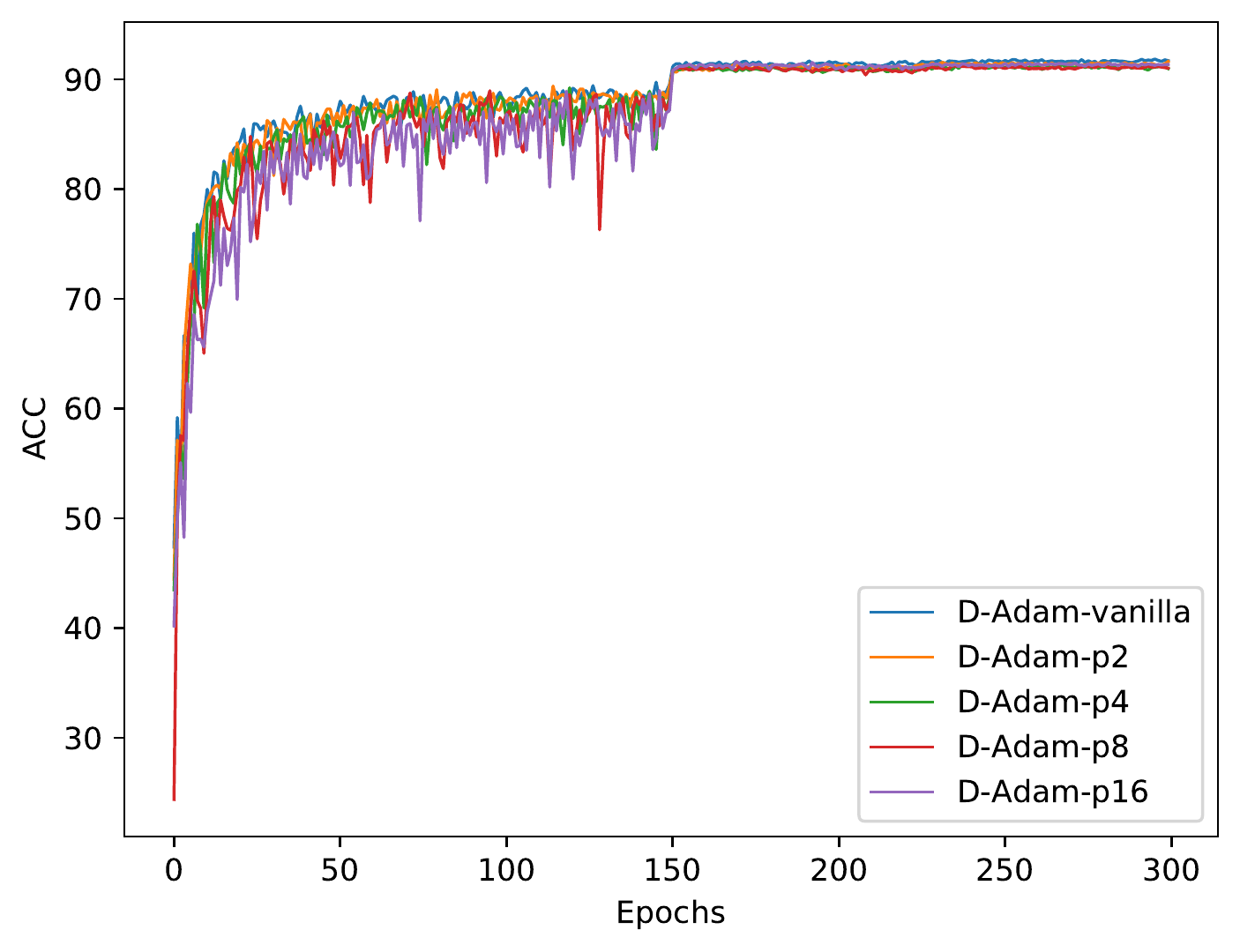}
		\label{test_acc_adam_cifar10_epoch}
	}
	\qquad
	\subfigure[Criteo]{
		\includegraphics[width=0.258\textwidth]{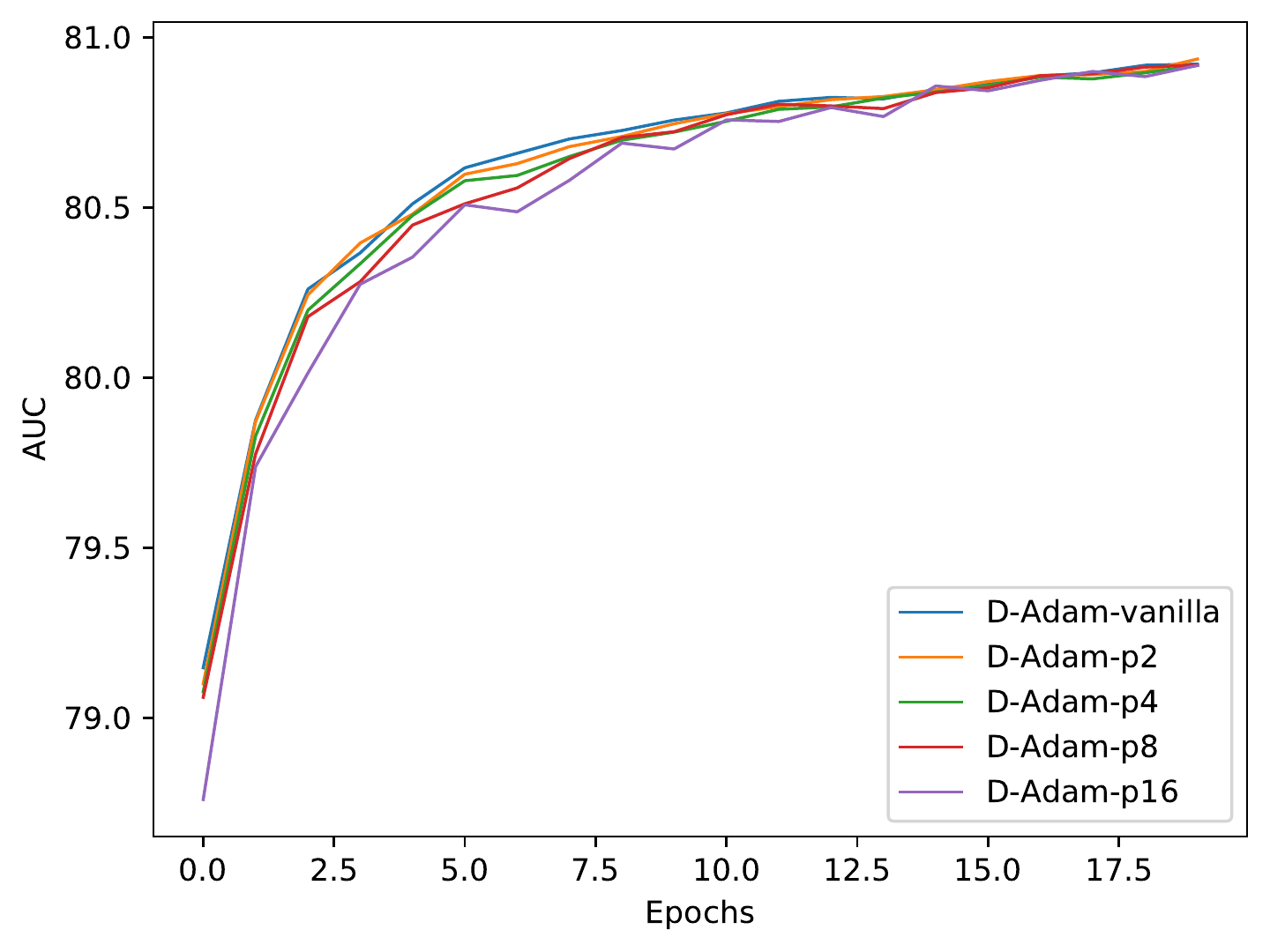}
		\label{test_loss_adam_criteo_epoch}
	}
	\qquad
	\subfigure[Movielens-20M]{
		\includegraphics[width=0.258\textwidth]{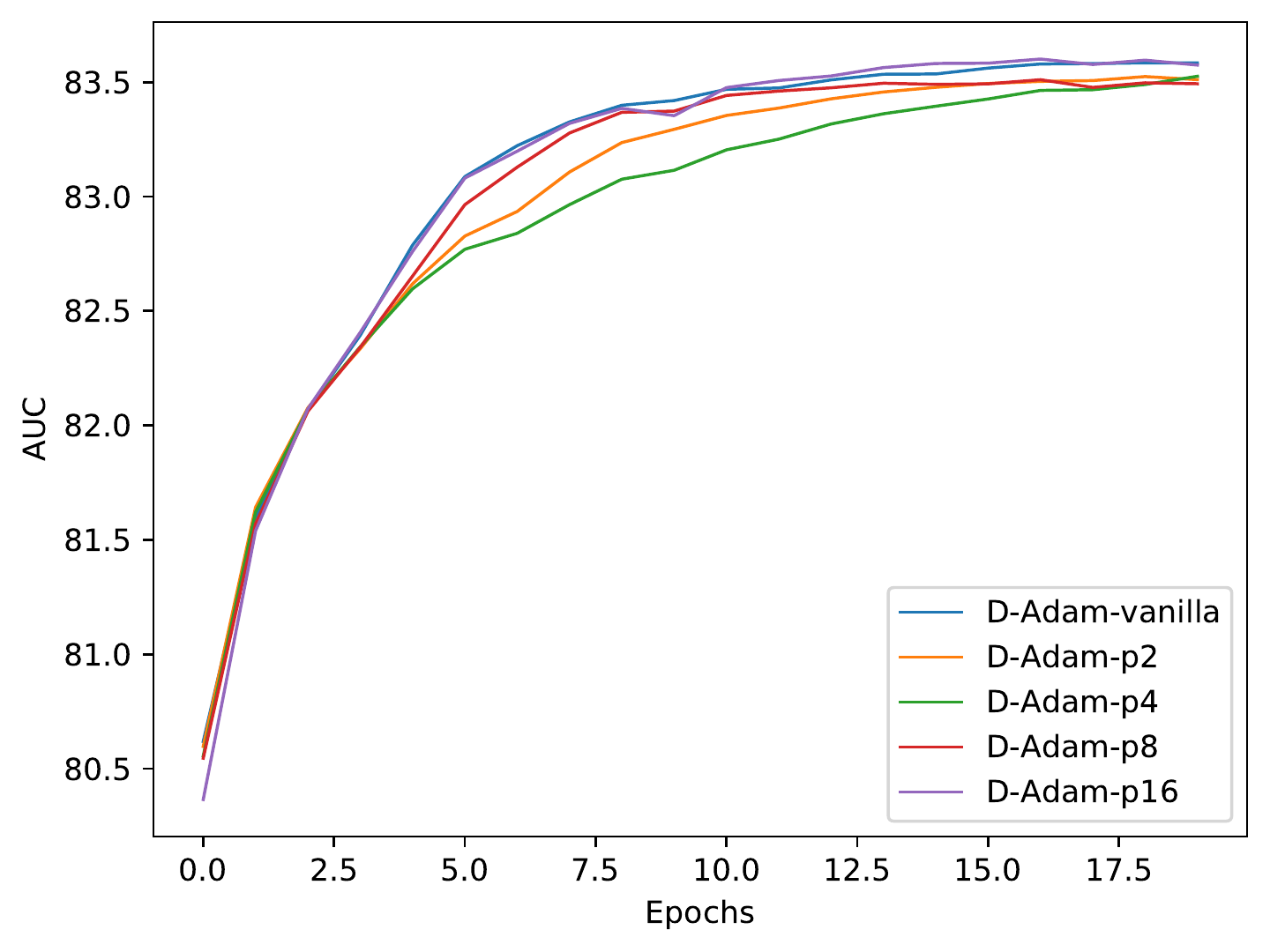}
		\label{test_loss_adam_avazu_epch}
	}
	\caption{ The testing performance of D-Adam w.r.t. communication cost (MB). }
	\label{convergence_curve_test_adam_epoch}
\end{figure}

	\begin{figure}[!htbp]
	\vspace{-3pt}
	\centering 
	\subfigure[CIFAR10]{
		\includegraphics[width=0.258\textwidth]{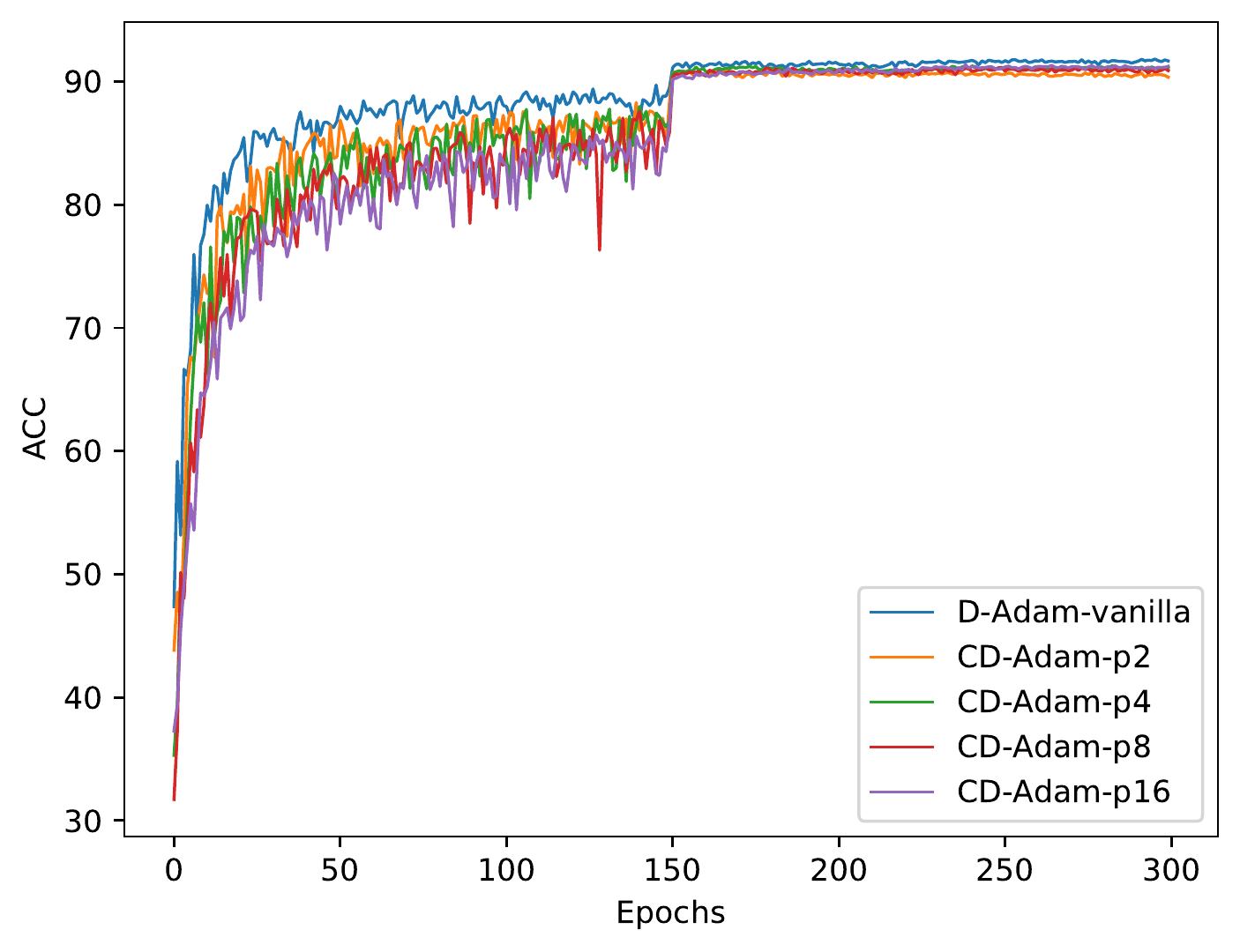}
		\label{test_acc_choco_adam_cifar10_epoch}
	}
	\qquad
	\subfigure[Criteo]{
		\includegraphics[width=0.259\textwidth]{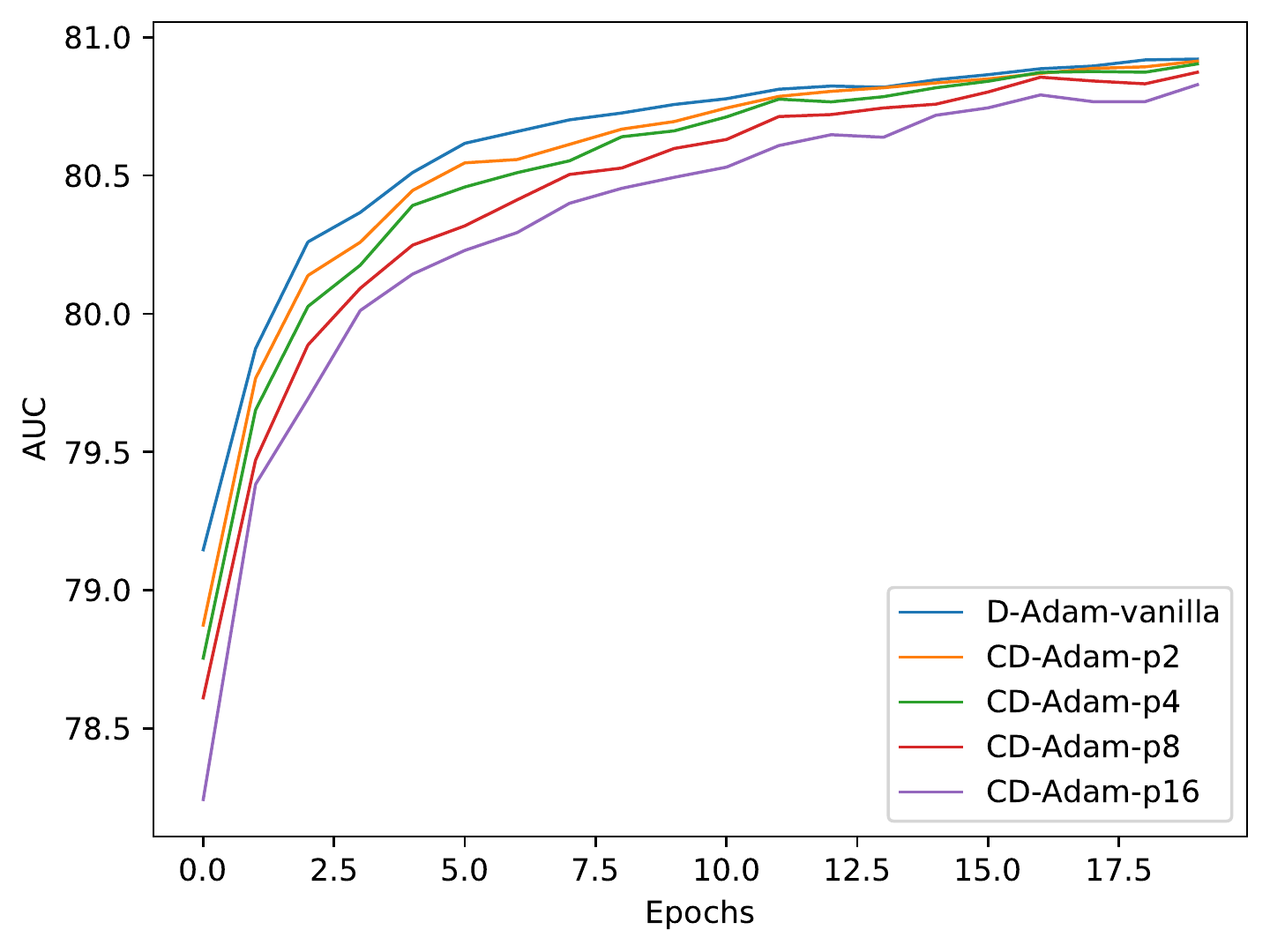}
		\label{test_loss_choco_adam_criteo_epoch}
	}
	\qquad
	\subfigure[Movielens-20M]{
		\includegraphics[width=0.258\textwidth]{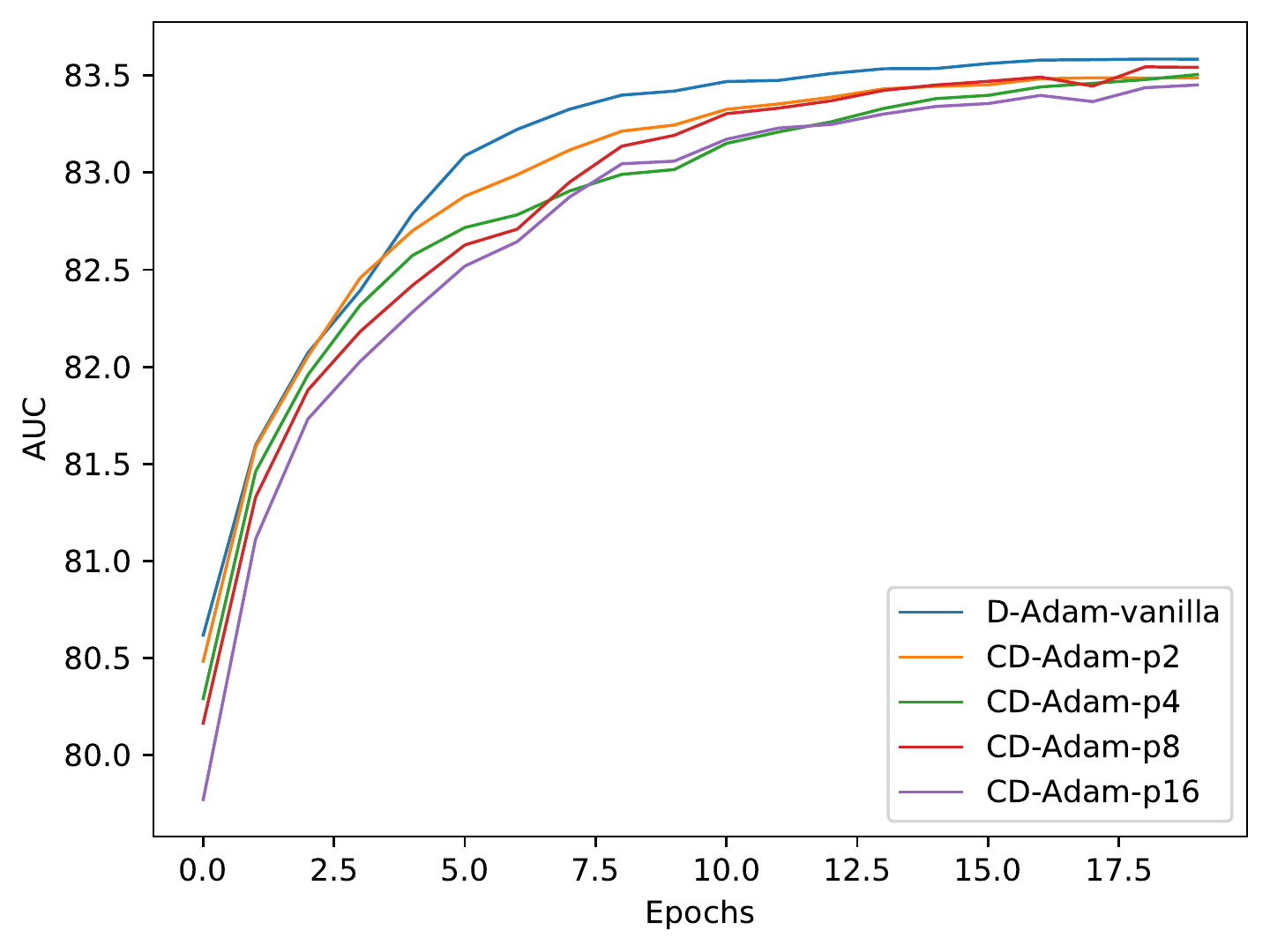}
		\label{test_loss_choco_adam_avazu_epoch}
	}
	\caption{ The testing performance of CD-Adam w.r.t. communication cost (MB). }
	\label{convergence_curve_test_choco_adam_epoch}
	\vspace{-3pt}
\end{figure}

\subsection{Proof of Theorem 1} \label{proof_theorem1}

To prove the convergence, we introduce  the following matrix form notations:
\begin{equation}
\begin{aligned}
& X_t = [\mathbf{x}_{t}^{(1)}, \mathbf{x}_{t}^{(2)}, \cdots, \mathbf{x}_{t}^{(K)}] \in \mathbb{R}^{d\times K}  \ ,\\
& \bar{X}_t = [\bar{\mathbf{x}}_t, \bar{\mathbf{x}}_t, \cdots, \bar{\mathbf{x}}_t] \in \mathbb{R}^{d\times K} \ , \\
& \Delta_t= [\frac{\mathbf{g}_t^{(1)}}{\sqrt{\mathbf{v}_{t}^{(1)} } + \tau} , \frac{\mathbf{g}_t^{(2)}}{\sqrt{\mathbf{v}_{t}^{(2)} } + \tau}  , \cdots, \frac{\mathbf{g}_t^{(k)}}{\sqrt{\mathbf{v}_{t}^{(k)} } + \tau} ] \in \mathbb{R}^{d\times K}   \ . \\
\end{aligned}
\end{equation}

Based on the aforementioned notations,  Algorithm~\ref{decentralized_adam} can be represented as follows:
\begin{equation} \label{mat_iter}
\begin{aligned}
& X_{t+\frac{1}{2}} = X_{t} - \eta \Delta_t   \ ,\\
& X_{t+1}=  X_{t+\frac{1}{2}}P  \ ,\\
\end{aligned}
\end{equation}
where $P=W$ when mod($t+1$, $p$)=0, otherwise $P=I$. In addition, we have
\begin{equation} \label{mat_avg}
\bar{X}_{t+1} =  X_{t+1}\frac{1}{K}\mathbf{1}\mathbf{1}^T   = \bar{X}_{t} - \eta\Delta_t \frac{1}{K}\mathbf{1}\mathbf{1}^T   \ ,
\end{equation}
where it follows from $P \frac{1}{K}\mathbf{1}\mathbf{1}^T= \frac{1}{K}\mathbf{1}\mathbf{1}^T$ since $W$ is a doubly stochastic matrix.
To prove Theorem~\ref{theorem1}, we introduce the following important lemmas.

\begin{lemma} \label{lemma_x_diff_1}
	Under Assumption~\ref{smooth}--\ref{norm}, we have
	\begin{equation}
	\mathbb{E}[\sum_{k=1}^{K}  \|\mathbf{x}_{t}^{(k)} - \bar{\mathbf{x}}_{t}\|^2 ]\leq  (1+\frac{4}{\rho^2})\frac{2d\eta^2p^2G^2K}{\tau^2} 
	\end{equation}
\end{lemma}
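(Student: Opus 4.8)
The plan is to work entirely in the matrix notation of Eq.~\eqref{mat_iter} and to control the consensus error through $\sum_{k=1}^{K}\|\mathbf{x}_{t}^{(k)}-\bar{\mathbf{x}}_{t}\|^2 = \|X_t - \bar{X}_t\|_F^2 = \|X_t(I-J)\|_F^2$, where $J=\frac{1}{K}\mathbf{1}\mathbf{1}^T$ and $\|\cdot\|_F$ is the Frobenius norm (so that the columns of $X_t-\bar{X}_t$ are exactly the deviations $\mathbf{x}_{t}^{(k)}-\bar{\mathbf{x}}_{t}$). First I would unroll the recursion $X_{t+1}=(X_t-\eta\Delta_t)P_t$ from the common initialization $X_0=\bar{X}_0$ (all workers start at the same point, per the algorithm's input), obtaining $X_t = X_0\Phi_{0,t} - \eta\sum_{s=0}^{t-1}\Delta_s\Phi_{s,t}$ with $\Phi_{s,t}=P_sP_{s+1}\cdots P_{t-1}$. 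Since every $P_r$ is either $I$ or the fixed matrix $W$, this product collapses to $\Phi_{s,t}=W^{m_{s,t}}$, where $m_{s,t}$ counts the communication rounds in $\{s,\dots,t-1\}$. Right-multiplying by $(I-J)$ and using that $W$ is doubly stochastic ($JW=WJ=J$, $J^2=J$ by Definition~\ref{graph}), the initialization term vanishes because $X_0\Phi_{0,t}(I-J)=X_0(I-J)=0$, and each factor simplifies to $\Phi_{s,t}(I-J)=W^{m_{s,t}}-J$, giving $X_t(I-J)=-\eta\sum_{s=0}^{t-1}\Delta_s(W^{m_{s,t}}-J)$.

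The key structural step is to group this sum by communication blocks: all indices $s$ sharing the same value $m_{s,t}=j$ form one block of at most $p$ consecutive iterations, so $X_t(I-J)=-\eta\sum_{j\ge 0} S_j(W^{j}-J)$ with $S_j=\sum_{s:\,m_{s,t}=j}\Delta_s$. From Definition~\ref{graph}, since $W$ is symmetric and shares its leading eigenvector with $J$, the matrix $W^{j}-J$ has spectral norm $\|W^{j}-J\|_2\le |\lambda_2|^{j}=(1-\rho)^{j}$, while $\|I-J\|_2=1$ for $j=0$. For the magnitude of each $\Delta_s$, Assumption~\ref{norm} gives $|[\mathbf{g}_s^{(k)}]_i|\le G$ and the denominator satisfies $\sqrt{\mathbf{v}_s^{(k)}}+\tau\ge\tau$, so every entry of $\Delta_s$ is bounded by $G/\tau$ and thus $\|\Delta_s\|_F^2\le KdG^2/\tau^2$; by Cauchy--Schwarz over the at most $p$ terms in a block, $\|S_j\|_F^2\le p\sum_{s}\|\Delta_s\|_F^2\le p^2KdG^2/\tau^2$.

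Finally I would combine these estimates via the triangle inequality together with the geometric series $\sum_{j\ge 1}(1-\rho)^{j}=(1-\rho)/\rho\le 1/\rho$, separating the current block ($j=0$, which produces the leading constant) from the already-mixed past blocks (which produce the $1/\rho^2$ factor after squaring the summed norm). This yields $\|X_t(I-J)\|_F^2 \lesssim (1+\tfrac{1}{\rho^2})\,\eta^2p^2KdG^2/\tau^2$, matching the claimed bound up to the constants absorbed into the factor $2$ and the $(1+\tfrac{4}{\rho^2})$ (the looser constant coming from bounding the geometric tail and the block count crudely); the expectation in the statement is then immediate, since every inequality above holds deterministically thanks to the almost-sure gradient bound. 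I expect the main obstacle to be exactly the block-grouping step: one must simultaneously exploit the \emph{within-block} drift (at most $p$ local steps, producing the $p^2$ factor via Cauchy--Schwarz) and the \emph{across-block} contraction of the mixing matrix (the geometric decay $(1-\rho)^{j}$ producing the $1/\rho^2$ factor), and must organize the unrolled product so that the periodic-communication structure $\Phi_{s,t}=W^{m_{s,t}}$ becomes visible.
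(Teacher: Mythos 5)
Your proof is correct, and it reaches the paper's bound by a genuinely different route. The paper never unrolls the trajectory in closed form: it works with squared Frobenius norms throughout, first splitting off the current partial block with Young's inequality, and then bounding the consensus error at communication times by a one-block recursion --- each gossip step contracts $\mathbb{E}[\|X_{s_tp}-\bar{X}_{s_tp}\|_F^2]$ by $(1-\rho)$ via $\|W-\frac{1}{K}\mathbf{1}\mathbf{1}^T\|_2\leq 1-\rho$ (Lemma~\ref{lemma_w_spectral}), Young's inequality with parameter $a=2/\rho$ absorbs the within-block drift to give $A_{s}\leq(1-\frac{\rho}{2})A_{s-1}+\frac{2d\eta^2p^2G^2K}{\rho\tau^2}$, and this recursion is unrolled geometrically. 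You instead write the deviation in closed form, $X_t(I-J)=-\eta\sum_{j\geq 0}S_j(W^{j}-J)$, group the drift into blocks of at most $p$ steps, apply the triangle inequality on \emph{unsquared} norms using $\|W^{j}-J\|_2=\|(W-J)^{j}\|_2\leq(1-\rho)^{j}$, and square only at the end. Both arguments rest on the same two ingredients (the per-communication spectral contraction, and the entrywise bound $\|\Delta_s\|_F^2\leq dG^2K/\tau^2$ combined with Cauchy--Schwarz over a block to produce the $p^2$), but yours eliminates the Young-parameter bookkeeping and is in fact slightly sharper: summing the geometric series before squaring gives $\frac{d\eta^2p^2G^2K}{\rho^2\tau^2}$, versus the paper's $(2+\frac{8}{\rho^2})\frac{d\eta^2p^2G^2K}{\tau^2}$. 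Two points you use implicitly are worth stating: the collapse $\Phi_{s,t}=W^{m_{s,t}}$ holds because every factor is either $I$ or the single fixed matrix $W$, and the identity $W^{j}-J=(W-J)^{j}$ follows from $WJ=JW=J^2=J$; also, killing the initialization term requires the common start $\mathbf{x}_0^{(k)}=\mathbf{x}_0$, which the paper's proof needs just as much (it is the base case of its ``recursive expansion''). Finally, the paper's recursion-style formulation buys one thing your closed-form unrolling does not: it is the argument that survives in Lemma~\ref{lemma_x_diff_2} for Algorithm~\ref{decentralized_adam_com}, where the nonlinear compression operator $Q$ makes an explicit unrolled expression unavailable, so the paper's proof of Lemma~\ref{lemma_x_diff_1} is written to parallel that harder case.
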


\begin{proof}
	
	Denoting $s_t=\floor{\frac{t}{p}}$, due to 
	\begin{equation}
	\begin{aligned}
	& X_{t} =  X_{s_tp} -  \eta \sum_{t'=s_tp}^{t-1}\Delta_{t'}   \ , \bar{X}_{t} = \bar{X}_{s_tp} - \eta \sum_{t'=s_tp}^{t-1} \Delta_{t'} \frac{1}{K}\mathbf{1}\mathbf{1}^T   \ ,\\
	\end{aligned}
	\end{equation}
	we have
	\begin{equation}
	\begin{aligned}
	&\mathbb{E}[ \| X_{t} - \bar{X}_{t} \|_F^2] = \mathbb{E}[\|X_{s_tp}- \bar{X}_{s_tp}+\eta \sum_{t'=s_tp}^{t-1}\Delta_{t'}( \frac{1}{K}\mathbf{1}\mathbf{1}^T  - \mathbf{I} )\|_F^2] \\
	& \leq  2\mathbb{E}[\|X_{s_tp}- \bar{X}_{s_tp} \|_F^2]+2\eta^2\mathbb{E}[\|\sum_{t'=s_tp}^{t-1}\Delta_{t'}( \frac{1}{K}\mathbf{1}\mathbf{1}^T  - \mathbf{I} )\|_F^2 ] \ .\\
	\end{aligned}
	\end{equation}
	Furthermore, we have

	\begin{equation}
	\begin{aligned}
	& \mathbb{E}[\|X_{s_tp} - \bar{X}_{s_tp}\|_F^2] \\
	& =  \mathbb{E}[\|(X_{(s_tp-1)+\frac{1}{2}} - \bar{X}_{(s_tp-1)+\frac{1}{2}})W\|_F^2] \\
	& =   \mathbb{E}[\|(X_{(s_tp-1)+\frac{1}{2}} - \bar{X}_{(s_tp-1)+\frac{1}{2}})(W-\frac{1}{K}\mathbf{1}\mathbf{1}^T)\|_F^2] \\
	& \leq \mathbb{E}[\|W-\frac{1}{K}\mathbf{1}\mathbf{1}^T\|_2^2  \|X_{(s_tp-1)+\frac{1}{2}} - \bar{X}_{(s_tp-1)+\frac{1}{2}}\|_F^2] \\
	& \leq (1-\rho)  \mathbb{E}[\|X_{(s_tp-1)+\frac{1}{2}} - \bar{X}_{(s_tp-1)+\frac{1}{2}}\|_F^2] \\
	& = (1-\rho)  \mathbb{E}[\|X_{(s_t-1)p}- \bar{X}_{(s_t-1)p}+\eta \sum_{t'=(s_t-1)p}^{s_tp-1}\Delta_{t'}( \frac{1}{K}\mathbf{1}\mathbf{1}^T  - \mathbf{I} )\|_F^2 ]\\
	& \leq  (1-\rho)(1+\frac{1}{a})  \mathbb{E}[\|X_{(s_t-1)p}- \bar{X}_{(s_t-1)p}\|_F^2 ]+(1-\rho)(1+a)\eta^2  \mathbb{E}[\| \sum_{t'=(s_t-1)p}^{s_tp-1}\Delta_{t'} \|_F^2]\\
	& \leq (1-\frac{\rho}{2})\|X_{(s_t-1)p}- \bar{X}_{(s_t-1)p}\|_F^2  + \frac{2d\eta^2p^2G^2K}{\rho\tau^2}\\
	& \leq  \frac{4d\eta^2p^2G^2K}{\rho^2\tau^2}
	\end{aligned}
	\end{equation}
	where the second equality follows from $\bar{X}_{(s_tp-1)+\frac{1}{2}}=X_{(s_tp-1)+\frac{1}{2}}\frac{1}{K}\mathbf{1}\mathbf{1}^T$, the first inequality follows from $\|XY\|_F\leq \|X\|_2\|Y\|_F$, the second inequality follows from Lemma \ref{lemma_w_spectral} that $\|W-\frac{1}{K}\mathbf{1}\mathbf{1}^T\|_2^2\leq (1-\rho)^2\leq (1-\rho)$ since $\rho \in (0, 1]$, the second to last inequality follows from $a=\frac{2}{\rho}$, $\|\frac{\mathbf{g}_t^{(1)}}{\sqrt{\mathbf{v}_{t}^{(1)} } + \tau}\|^2\leq \frac{dG^2}{\tau^2}$, and the last step is obtained by recursive expansion. 
	
	Therefore, 
	\begin{equation}
	\begin{aligned}
	& \mathbb{E}[\sum_{k=1}^{K}  \|\mathbf{x}_{t}^{(k)} - \bar{\mathbf{x}}_{t}\|^2 ] = \mathbb{E}[ \| X_{t} - \bar{X}_{t} \|_F^2]   \leq \frac{8d\eta^2p^2G^2K}{\rho^2\tau^2} + \frac{2d\eta^2p^2G^2K}{\tau^2}  = (1+\frac{4}{\rho^2})\frac{2d\eta^2p^2G^2K}{\tau^2}  \ .
	\end{aligned}
	\end{equation}
	
\end{proof}

\vspace{-20pt}
Based on the aforementioned lemmas, we are ready to prove Theorem~\ref{theorem1}. Here, following  \cite{reddi2020adaptive,zaheer2018adaptive},  we consider $\beta_1=0$ for simplicity and it is easy to be extended to the general case. 
\begin{proof}
	From Eq.~(\ref{mat_avg}), we have
	\begin{equation}
	\bar{\mathbf{x}}_{t+1} = 	\bar{\mathbf{x}}_{t} - \frac{\eta }{K}\sum_{k=1}^{K}\frac{\mathbf{g}_t^{(k)}}{\sqrt{\mathbf{v}_{t}^{(k)} } + \tau}  \ .
	\end{equation}
	Based on the smoothness of the loss function, we have
	\begin{equation} \label{taylor_alg1}
	\begin{aligned}
	& \mathbb{E}[f(\bar{\mathbf{x}}_{t+1})] \leq  \mathbb{E}[f(\bar{\mathbf{x}}_{t}) + \langle \nabla  f(\bar{\mathbf{x}}_{t}), \bar{\mathbf{x}}_{t+1}-\bar{\mathbf{x}}_{t} \rangle + \frac{L}{2} \|\bar{\mathbf{x}}_{t+1}-\bar{\mathbf{x}}_{t} \|^2] \\
	& =  f(\bar{\mathbf{x}}_{t}) - \mathbb{E}[\langle \nabla  f(\bar{\mathbf{x}}_{t}), \frac{\eta }{K}\sum_{k=1}^{K}\frac{\mathbf{g}_t^{(k)}}{\sqrt{\mathbf{v}_{t}^{(k)} } + \tau}  \rangle] + \frac{L}{2} \mathbb{E}[\|\frac{\eta }{K}\sum_{k=1}^{K}\frac{\mathbf{g}_t^{(k)}}{\sqrt{\mathbf{v}_{t}^{(k)} } + \tau} \|^2] \\
	&  \leq f(\bar{\mathbf{x}}_{t}) - \mathbb{E}[\langle \nabla  f(\bar{\mathbf{x}}_{t}), \frac{\eta }{K}\sum_{k=1}^{K}\frac{\mathbf{g}_t^{(k)}}{\sqrt{\mathbf{v}_{t}^{(k)} } + \tau}  - \frac{\eta }{K}\sum_{k=1}^{K}\frac{\mathbf{g}_t^{(k)}}{\sqrt{\beta_2\mathbf{v}_{t-1}^{(k)} } + \tau} + \frac{\eta }{K}\sum_{k=1}^{K}\frac{\mathbf{g}_t^{(k)}}{\sqrt{\beta_2\mathbf{v}_{t-1}^{(k)} } + \tau} \rangle]  \\
	& \quad + \frac{L}{2} \mathbb{E}[\|\frac{\eta }{K}\sum_{k=1}^{K}\frac{\mathbf{g}_t^{(k)}}{\sqrt{\mathbf{v}_{t}^{(k)} } + \tau} \|^2] \\
	& \leq  f(\bar{\mathbf{x}}_{t}) \underbrace{- \frac{\eta }{K}\mathbb{E}[\sum_{k=1}^{K} \langle \nabla  f(\bar{\mathbf{x}}_{t}), \frac{\mathbf{g}_t^{(k)}}{\sqrt{\beta_2\mathbf{v}_{t-1}^{(k)} } + \tau} \rangle]}_{T_1}  \underbrace{ - \frac{\eta }{K}\mathbb{E}[\sum_{k=1}^{K} \langle \nabla  f(\bar{\mathbf{x}}_{t}),\frac{\mathbf{g}_t^{(k)}}{\sqrt{\mathbf{v}_{t}^{(k)} } + \tau}  -  \frac{\mathbf{g}_t^{(k)}}{\sqrt{\beta_2\mathbf{v}_{t-1}^{(k)} } + \tau} \rangle ] }_{T_2} \\
	& \quad + \frac{\eta^2L}{2} \mathbb{E}[\|\frac{1 }{K}\sum_{k=1}^{K}\frac{\mathbf{g}_t^{(k)}}{\sqrt{\mathbf{v}_{t}^{(k)} } + \tau} \|^2]  \ .\\
	\end{aligned}
	\end{equation}
	
	As for $T_1$ in Eq.~(\ref{taylor_alg1}), we have
	\begin{equation}
	\begin{aligned}
	& T_1=- \frac{\eta }{K}\mathbb{E}[\sum_{k=1}^{K} \langle \nabla  f(\bar{\mathbf{x}}_{t}), \frac{\mathbf{g}_t^{(k)}}{\sqrt{\beta_2\mathbf{v}_{t-1}^{(k)} } + \tau} \rangle ]\\
	&  = - \frac{\eta }{K}\mathbb{E}[\sum_{k=1}^{K} \langle \frac{\nabla  f(\bar{\mathbf{x}}_{t})}{\sqrt{\beta_2\mathbf{v}_{t-1}^{(k)} } + \tau}, \mathbf{g}_t^{(k)} - \nabla  f(\bar{\mathbf{x}}_{t}) + \nabla  f(\bar{\mathbf{x}}_{t})\rangle] \\
	& =  - \frac{\eta }{K}\mathbb{E}[\sum_{k=1}^{K}  \langle \frac{\nabla  f(\bar{\mathbf{x}}_{t})}{\sqrt{\beta_2\mathbf{v}_{t-1}^{(k)} } + \tau}, \nabla  f(\bar{\mathbf{x}}_{t})\rangle + \sum_{k=1}^{K}  \langle \frac{\nabla  f(\bar{\mathbf{x}}_{t})}{\sqrt{\beta_2\mathbf{v}_{t-1}^{(k)} } + \tau}, \mathbf{g}_t^{(k)} - \nabla  f(\bar{\mathbf{x}}_{t}) \rangle] \\
	& =  - \frac{\eta }{K} \mathbb{E}[\sum_{k=1}^{K}  \sum_{j=1}^{d}\frac{[\nabla  f(\bar{\mathbf{x}}_{t})]_j^2}{\sqrt{\beta_2\mathbf{v}_{t-1, j}^{(k)} } + \tau} ]- \frac{\eta }{K}\mathbb{E}[\sum_{k=1}^{K} \langle \frac{\nabla  f(\bar{\mathbf{x}}_{t})}{\sqrt{\beta_2\mathbf{v}_{t-1}^{(k)} } + \tau}, \mathbf{g}_t^{(k)} - \nabla  f(\bar{\mathbf{x}}_{t}) \rangle] \ .
	\end{aligned}
	\end{equation}
	
	Now, we can bound the last term in the last step as follows:
	\begin{equation}
	\begin{aligned}
	& - \frac{\eta }{K}\mathbb{E}[\sum_{k=1}^{K} \langle \frac{\nabla  f(\bar{\mathbf{x}}_{t})}{\sqrt{\beta_2\mathbf{v}_{t-1}^{(k)} } + \tau}, \mathbf{g}_t^{(k)} - \nabla  f(\bar{\mathbf{x}}_{t}) \rangle] \\
	& = - \frac{\eta }{K}\mathbb{E}[\sum_{k=1}^{K} \langle \frac{\nabla  f(\bar{\mathbf{x}}_{t})}{\sqrt{\sqrt{\beta_2\mathbf{v}_{t-1}^{(k)} } + \tau}}, \frac{\nabla  f^{(k)}({\mathbf{x}}_{t}^{(k)}) - \nabla  f(\bar{\mathbf{x}}_{t}) }{\sqrt{\sqrt{\beta_2\mathbf{v}_{t-1}^{(k)} } + \tau}}\rangle] \\
	& \leq   \frac{\eta }{2K} \sum_{k=1}^{K}  \sum_{j=1}^{d}\frac{[\nabla  f(\bar{\mathbf{x}}_{t})]_j^2}{\sqrt{\beta_2\mathbf{v}_{t-1, j}^{(k)} } + \tau} + \frac{\eta }{2K} \mathbb{E}[\sum_{k=1}^{K}  \|\frac{\nabla  f^{(k)}({\mathbf{x}}_{t}^{(k)}) - \nabla  f(\bar{\mathbf{x}}_{t}) }{\sqrt{\sqrt{\beta_2\mathbf{v}_{t-1}^{(k)} } + \tau}}\|^2 ]\\
	& \leq \frac{\eta }{2K} \sum_{k=1}^{K}  \sum_{j=1}^{d}\frac{[\nabla  f(\bar{\mathbf{x}}_{t})]_j^2}{\sqrt{\beta_2\mathbf{v}_{t-1, j}^{(k)} } + \tau} + \frac{\eta }{2\tau K} \mathbb{E}[\sum_{k=1}^{K}  \|\nabla  f^{(k)}({\mathbf{x}}_{t}^{(k)})   - \nabla  f(\bar{\mathbf{x}}_{t})  \|^2 ]\\
	& \leq \frac{\eta }{2K} \sum_{k=1}^{K}  \sum_{j=1}^{d}\frac{[\nabla  f(\bar{\mathbf{x}}_{t})]_j^2}{\sqrt{\beta_2\mathbf{v}_{t-1, j}^{(k)} } + \tau} + \frac{\eta L^2}{2\tau K} \mathbb{E}[\sum_{k=1}^{K}  \|\mathbf{x}_{t}^{(k)} - \bar{\mathbf{x}}_{t}\|^2 ]  \\
	& \leq \frac{\eta }{2K} \sum_{k=1}^{K}  \sum_{j=1}^{d}\frac{[\nabla  f(\bar{\mathbf{x}}_{t})]_j^2}{\sqrt{\beta_2\mathbf{v}_{t-1, j}^{(k)} } + \tau} +   (1+\frac{4}{\rho^2})\frac{d\eta^3p^2G^2L^2}{\tau^3}   \ ,\\
	\end{aligned}
	\end{equation}
	where the second to last inequality follows from Assumption~\ref{smooth}, the last inequality follows from Lemma~\ref{lemma_x_diff_1}.
	
	As for $T_2$ in Eq.~(\ref{taylor_alg1}), we have
	\begin{equation}
	\begin{aligned}
	&  T_2=- \frac{\eta }{K}\mathbb{E}[\sum_{k=1}^{K} \langle \nabla  f(\bar{\mathbf{x}}_{t}),\frac{\mathbf{g}_t^{(k)}}{\sqrt{\mathbf{v}_{t}^{(k)} } + \tau}   -  \frac{\mathbf{g}_t^{(k)}}{\sqrt{\beta_2\mathbf{v}_{t-1}^{(k)} } + \tau} \rangle]  \\
	&  = -\frac{\eta }{K}\mathbb{E}[\sum_{k=1}^{K} \sum_{j=1}^{d} [ \nabla  f(\bar{\mathbf{x}}_{t})]_j\times \mathbf{g}_{t,j}^{(k)}\times (\frac{1}{\sqrt{\mathbf{v}_{t,j}^{(k)}}+\tau} - \frac{1}{\sqrt{\beta_2\mathbf{v}_{t-1,j}^{(k)}}+\tau}) ]\\
	&  = -\frac{\eta }{K}\mathbb{E}[\sum_{k=1}^{K} \sum_{j=1}^{d} [ \nabla  f(\bar{\mathbf{x}}_{t})]_j\times \mathbf{g}_{t,j}^{(k)}\times \frac{\sqrt{\beta_2\mathbf{v}_{t-1,j}^{(k)}}- \sqrt{\mathbf{v}_{t,j}^{(k)}}}{(\sqrt{\mathbf{v}_{t,j}^{(k)}}+\tau)(\sqrt{\beta_2\mathbf{v}_{t-1,j}^{(k)}}+\tau)} ]\\
	& =  -\frac{\eta }{K}\mathbb{E}[\sum_{k=1}^{K} \sum_{j=1}^{d} [ \nabla  f(\bar{\mathbf{x}}_{t})]_j\times \mathbf{g}_{t,j}^{(k)}\times \frac{{\beta_2\mathbf{v}_{t-1,j}^{(k)}}- {\mathbf{v}_{t,j}^{(k)}}}{(\sqrt{\mathbf{v}_{t,j}^{(k)}}+\tau)(\sqrt{\beta_2\mathbf{v}_{t-1,j}^{(k)}}+\tau)(\sqrt{\beta_2\mathbf{v}_{t-1,j}^{(k)}}+ \sqrt{\mathbf{v}_{t,j}^{k}})}] \\
	& =  \frac{\eta }{K}\mathbb{E}[\sum_{k=1}^{K} \sum_{j=1}^{d} [ \nabla  f(\bar{\mathbf{x}}_{t})]_j\times \mathbf{g}_{t,j}^{(k)}\times \frac{(1-\beta_2)(\mathbf{g}_{t,j}^{(k)})^2}{(\sqrt{\mathbf{v}_{t,j}^{(k)}}+\tau)(\sqrt{\beta_2\mathbf{v}_{t-1,j}^{(k)}}+\tau)(\sqrt{\beta_2\mathbf{v}_{t-1,j}^{(k)}}+ \sqrt{\mathbf{v}_{t,j}^{k}})} ]\\
	& \leq \frac{\eta }{K}\mathbb{E}[\sum_{k=1}^{K} \sum_{j=1}^{d} |[ \nabla  f(\bar{\mathbf{x}}_{t})]_j|\times |\mathbf{g}_{t,j}^{(k)}|\times \frac{(1-\beta_2)(\mathbf{g}_{t,j}^{(k)})^2}{(\sqrt{\mathbf{v}_{t,j}^{(k)}}+\tau)(\sqrt{\beta_2\mathbf{v}_{t-1,j}^{(k)}}+\tau)(\sqrt{\beta_2\mathbf{v}_{t-1,j}^{(k)}}+ \sqrt{\mathbf{v}_{t,j}^{k}})} ]\\
	& \leq \frac{\eta\sqrt{1-\beta_2} }{K}\mathbb{E}[\sum_{k=1}^{K} \sum_{j=1}^{d} |[ \nabla  f(\bar{\mathbf{x}}_{t})]_j|\times \frac{(\mathbf{g}_{t,j}^{(k)})^2}{(\sqrt{\mathbf{v}_{t,j}^{(k)}}+\tau)(\sqrt{\beta_2\mathbf{v}_{t-1,j}^{(k)}}+\tau)} ]\\
	& \leq \frac{\eta G\sqrt{1-\beta_2} }{\tau K}\mathbb{E}[\sum_{k=1}^{K} \sum_{j=1}^{d}  \frac{(\mathbf{g}_{t,j}^{(k)})^2}{\sqrt{\mathbf{v}_{t,j}^{(k)}}+\tau} ] \ ,\\
	\end{aligned}
	\end{equation}
	where the fourth equality and the second inequality follow from $\mathbf{v}_{t}^{(k)} = \beta_2\mathbf{v}_{t-1}^{(k)} + (1-\beta_2)\mathbf{g}_t^{(k)}\circ \mathbf{g}_t^{(k)}$ in Algorithm~\ref{decentralized_adam}, 
	the last inequality follows that $\sqrt{\beta_2\mathbf{v}_{t-1,j}^{(k)}}+\tau>\tau$ and $|[ \nabla  f(\bar{\mathbf{x}}_{t})]_j|<G$.

	Putting $T_1$ and $T_2$ into Eq.~(\ref{taylor_alg1}), we have
	\begin{equation}
	\begin{aligned}
	&  \mathbb{E}[f(\bar{\mathbf{x}}_{t+1}) ] \leq  f(\bar{\mathbf{x}}_{t}) -\frac{\eta }{2K} \sum_{k=1}^{K}  \sum_{j=1}^{d}\frac{[\nabla  f(\bar{\mathbf{x}}_{t})]_j^2}{\sqrt{\beta_2\mathbf{v}_{t-1, j}^{(k)} } + \tau} + (1+\frac{4}{\rho^2})\frac{d\eta^3p^2G^2L^2}{\tau^3}  \\
	& + \frac{\eta G\sqrt{1-\beta_2} }{\tau K}\mathbb{E}[\sum_{k=1}^{K} \sum_{j=1}^{d}  \frac{(\mathbf{g}_{t,j}^{(k)})^2}{\sqrt{\mathbf{v}_{t,j}^{(k)}}+\tau}]  + \frac{\eta^2L}{2K} \mathbb{E}[\sum_{k=1}^{K} \sum_{j=1}^{d} \frac{(\mathbf{g}_{t,j}^{(k)})^2}{\mathbf{v}_{t,j}^{(k)}+\tau^2} ]  \ .\\
	\end{aligned}
	\end{equation}

	Furthermore, 
	\begin{equation}
	\begin{aligned}
	& \frac{1}{K}\sum_{k=1}^{K} \sum_{j=1}^{d} \frac{(\mathbf{g}_{t,j}^{(k)})^2}{\mathbf{v}_{t,j}^{(k)}+\tau^2}  \leq \frac{1}{K}\sum_{k=1}^{K} \sum_{j=1}^{d} \frac{(\mathbf{g}_{t,j}^{(k)})^2}{\tau^2} \\
	& = \frac{1}{\tau^2K}\sum_{k=1}^{K} \|\mathbf{g}_{t}^{(k)} - \nabla f^{(k)}(\mathbf{x}_t^{(k)}) +  \nabla f^{(k)}(\mathbf{x}_t^{(k)})  - \nabla f(\bar{\mathbf{x}}_t) +  \nabla f(\bar{\mathbf{x}}_t) \|^2 \\
	& \leq \frac{3}{\tau^2K}\sum_{k=1}^{K}\Big(\|\mathbf{g}_{t}^{(k)} - \nabla f^{(k)}(\mathbf{x}_t^{(k)})\|^2 +\| \nabla f^{(k)}(\mathbf{x}_t^{(k)}) - \nabla f(\bar{\mathbf{x}}_t) \|^2 +\|\nabla f(\bar{\mathbf{x}}_t) \|^2\Big)  \\
	& \leq \frac{3}{\tau^2}\sum_{j=1}^d\sigma_{j}^2+ \frac{3}{\tau^2K}\sum_{k=1}^{K} \|\mathbf{x}_t^{(k)} - \bar{\mathbf{x}}\|^2 + \frac{3}{\tau^2}  \|\nabla f(\bar{\mathbf{x}}_t) \|^2 \\
	& \leq\frac{3}{\tau^2}\sum_{j=1}^d\sigma_{j}^2+ (1+ \frac{4}{\rho^2}) \frac{6d\eta^2p^2G^2}{\tau^4}  + \frac{3}{\tau^2}  \|\nabla f(\bar{\mathbf{x}}_t) \|^2 \ ,
	\end{aligned}
	\end{equation}
	where the last inequality follows from Lemma~\ref{lemma_x_diff_1}.
	Similarly, we have
	\begin{equation}
	\begin{aligned}
	& \frac{1}{\tau K}\sum_{k=1}^{K} \sum_{j=1}^{d} \frac{(\mathbf{g}_{t,j}^{(k)})^2}{\sqrt{\mathbf{v}_{t,j}^{(k)}}+\tau}  \leq \frac{1}{K}\sum_{k=1}^{K} \sum_{j=1}^{d} \frac{(\mathbf{g}_{t,j}^{(k)})^2}{\tau^2} \leq\frac{3}{\tau^2}\sum_{j=1}^d\sigma_{j}^2+ (1+ \frac{4}{\rho^2}) \frac{6d\eta^2p^2G^2}{\tau^4}  + \frac{3}{\tau^2}  \|\nabla f(\bar{\mathbf{x}}_t) \|^2
	\end{aligned}
	\end{equation}

	Then, we have
	\begin{equation}
	\begin{aligned}
	& \frac{1}{T}\sum_{t=0}^{T-1}\frac{\eta }{2K} \sum_{k=1}^{K}  \sum_{j=1}^{d}\frac{[\nabla  f(\bar{\mathbf{x}}_{t})]_j^2}{\sqrt{\beta_2\mathbf{v}_{t-1, j}^{(k)} } + \tau} \leq \frac{f(\mathbf{x}_0) -   f_*}{T}+  (1+ \frac{4}{\rho^2}) \frac{d\eta^3p^2G^2L^2}{\tau^3} \\
	& + (\eta G\sqrt{1-\beta_2} + \frac{\eta^2L}{2})\Big(\frac{3}{\tau^2}\sum_{j=1}^d\sigma_{j}^2+ (1+ \frac{4}{\rho^2}) \frac{6d\eta^2p^2G^2}{\tau^4}  + \frac{3}{\tau^2T} \sum_{t=0}^{T-1}  \|\nabla f(\bar{\mathbf{x}}_t) \|^2\Big)  \ .\\
	\end{aligned}
	\end{equation}
	By setting $\eta<\frac{\tau^2}{3\sqrt{\beta_2}GL}$, we have 
	\begin{equation}
	\begin{aligned}
	& \frac{1}{T}\sum_{t=0}^{T-1}\frac{\eta }{2K} \sum_{k=1}^{K}  \sum_{j=1}^{d}\frac{[\nabla  f(\bar{\mathbf{x}}_{t})]_j^2}{\sqrt{\beta_2\mathbf{v}_{t-1, j}^{(k)} } + \tau} \leq \frac{f(\mathbf{x}_0) -   f_* }{T}+  (1+ \frac{4}{\rho^2}) \frac{d\eta^3p^2G^2L^2}{\tau^3} \\
	& + (\eta G\sqrt{1-\beta_2} + \frac{\eta^2L}{2})\Big(\frac{3}{\tau^2}\sum_{j=1}^d\sigma_{j}^2+ (1+ \frac{4}{\rho^2}) \frac{6d\eta^2p^2G^2}{\tau^4}  \Big)  \ .\\
	\end{aligned}
	\end{equation}
	In addition,  by setting $0<\tau<1$, we have 
	\begin{equation}
	\begin{aligned}
	& \frac{1}{T}\sum_{t=0}^{T-1}\frac{\eta }{2K} \sum_{k=1}^{K}  \sum_{j=1}^{d}\frac{[\nabla  f(\bar{\mathbf{x}}_{t})]_j^2}{\sqrt{\beta_2\mathbf{v}_{t-1, j}^{(k)} } + \tau} \geq \frac{1}{T}\sum_{t=0}^{T-1}\frac{\eta }{2K} \sum_{k=1}^{K}  \sum_{j=1}^{d}\frac{[\nabla  f(\bar{\mathbf{x}}_{t})]_j^2}{\sqrt{\beta_2}G + \tau} = \frac{\eta}{2T(\sqrt{\beta_2}G + 1)}\sum_{t=0}^{T-1} \|\nabla  f(\bar{\mathbf{x}}_{t})\|^2 \ .
	\end{aligned}
	\end{equation}
	At last, we have
	\begin{equation}
	\begin{aligned}
	& \frac{1}{T}\sum_{t=0}^{T-1} \|\nabla  f(\bar{\mathbf{x}}_{t})\|^2 \leq 2(\sqrt{\beta_2}G + 1)\Bigg(\frac{f(\mathbf{x}_0) -   f_*}{\eta T}+  (1+ \frac{4}{\rho^2}) \frac{d\eta^2p^2G^2L^2}{\tau^3} \\
	& + ( G\sqrt{1-\beta_2} + \frac{\eta L}{2})\Big(\frac{3}{\tau^2}\sum_{j=1}^d\sigma_{j}^2+ (1+ \frac{4}{\rho^2}) \frac{6d\eta^2p^2G^2}{\tau^4} \Big)\Bigg)  \ . \\
	\end{aligned}
	\end{equation}
	
\end{proof}

\subsection{Proof of Theorem 2}  \label{proof_theorem2}
Based on the aforementioned matrix notations, when mod($t+1$, $p$)=0,  Algorithm~\ref{decentralized_adam_com} can be represented as follows:
\begin{equation} \label{matrix_compress}
\begin{aligned}
& X_{(s_tp-1)+\frac{1}{2}} = X_{(s_t-1)p} -\eta \sum_{t'=(s_t-1)p}^{s_tp-1}\Delta_{t'}\\
& X_{s_tp} = X_{(s_tp-1)+\frac{1}{2}} +\gamma \hat{X}_{(s_t-1)p}(W-I) \\ 
& \hat{X}_{s_tp} = \hat{X}_{(s_t-1)p} + Q(X_{s_tp}- \hat{X}_{(s_t-1)p})
\end{aligned}
\end{equation}
where $s_t=\floor{\frac{t+1}{p}}$ and $\hat{X}_t = [\hat{\mathbf{x}}_t^{(1)}, \hat{\mathbf{x}}_t^{(2)}, \cdots, \hat{\mathbf{x}}_t^{(K)}] \in \mathbb{R}^{d\times K} $.  In addition, we have
\begin{equation} \label{x_bar_compress}
\begin{aligned}
& \bar{X}_{s_tp} =  \bar{X}_{(s_tp-1)+\frac{1}{2}} +\gamma \hat{X}_{(s_t-1)p}(W-I)\frac{1}{K}\mathbf{1}\mathbf{1}^T =  \bar{X}_{(s_tp-1)+\frac{1}{2}} 
\end{aligned}
\end{equation}
where the last step follows that $W$ is a doubly stochastic matrix. 

\begin{lemma} \label{lemma_x_diff_2}
		Under Assumption~\ref{smooth}--\ref{norm}, we have
	\begin{equation}
	\begin{aligned}
	&\mathbb{E}[\sum_{k=1}^{K}  \|\mathbf{x}_{t}^{(k)} - \bar{\mathbf{x}}_{t}\|^2 ] \leq \frac{8d\eta^2 p^2G^2K}{\tau^2}(1+\frac{2}{\alpha^2}) \ ,\\
	\end{aligned}
	\end{equation}
	where $\alpha=\frac{\rho^2\delta}{82}$.
\end{lemma}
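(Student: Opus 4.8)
The plan is to treat this as the compressed-communication counterpart of Lemma~\ref{lemma_x_diff_1}, where the new ingredient is that the gossip step no longer uses the exact iterates $X$ but the compressed surrogates $\hat{X}$, so the consensus error becomes coupled with the compression discrepancy $\|X_{sp} - \hat{X}_{sp}\|_F^2$. First I would reproduce the reduction from Lemma~\ref{lemma_x_diff_1}: writing $s_t=\floor{\frac{t+1}{p}}$ and using $X_t = X_{s_tp} - \eta\sum_{t'=s_tp}^{t-1}\Delta_{t'}$ together with the corresponding expansion of $\bar{X}_t$, I would split
\begin{equation}
\mathbb{E}[\|X_t - \bar{X}_t\|_F^2] \leq 2\,\mathbb{E}[\|X_{s_tp} - \bar{X}_{s_tp}\|_F^2] + 2\eta^2\,\mathbb{E}\Big[\Big\|\sum_{t'=s_tp}^{t-1}\Delta_{t'}\big(\tfrac{1}{K}\mathbf{1}\mathbf{1}^T - I\big)\Big\|_F^2\Big].
\end{equation}
The second term is handled exactly as before, using $\|\tfrac{1}{K}\mathbf{1}\mathbf{1}^T - I\|_2\le 1$, the bound $\|\Delta_{t'}\|_F^2 \le KdG^2/\tau^2$ from Assumption~\ref{norm} together with $\mathbf{v}_{t}^{(k)}\ge 0$, and Cauchy--Schwarz over the at most $p$ summands, giving a contribution of order $\eta^2 p^2 d G^2 K/\tau^2$. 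The whole difficulty therefore concentrates in bounding the consensus error $\mathbb{E}[\|X_{s_tp} - \bar{X}_{s_tp}\|_F^2]$ \emph{at the communication points}.

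For that term I would run a CHOCO-style Lyapunov recursion over the communication index $s$. The essential obstacle is that a single compressed gossip round mixes three effects: the averaging contraction governed by the spectral gap $\rho$ (Definition~\ref{graph} and Lemma~\ref{lemma_w_spectral}), the information loss of the $\delta$-contraction operator $Q$ (Definition~\ref{compress_op}), and the additive perturbation injected by the $p$ local adaptive steps since the previous round. I would define a combined potential $\Phi_s = \mathbb{E}[\|X_{sp} - \bar{X}_{sp}\|_F^2] + \omega\,\mathbb{E}[\|X_{sp} - \hat{X}_{sp}\|_F^2]$ for a weight $\omega>0$ to be fixed, and, using the matrix recursions in Eq.~(\ref{matrix_compress})--(\ref{x_bar_compress}) together with the contraction $\|X_{sp}-\hat{X}_{sp}\|_F^2 \le (1-\delta)\|X_{sp}-\hat{X}_{(s-1)p}\|_F^2$, establish a one-step inequality of the form $\Phi_s \le (1-\alpha)\Phi_{s-1} + D$ with $\alpha = \frac{\rho^2\delta}{82}$ and per-round drift $D$ of order $\eta^2 p^2 d G^2 K/\tau^2$. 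Producing precisely this contraction rate is the crux: it forces a careful sequence of Young's inequalities $\|a+b\|^2\le(1+c)\|a\|^2+(1+1/c)\|b\|^2$ in which the splitting parameters $c$, the gossip step size $\gamma$, and the weight $\omega$ are all tuned against $\rho$ and $\delta$, and this tuning is where the universal constant $82$ originates.

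Finally I would unroll the recursion $\Phi_s \le (1-\alpha)\Phi_{s-1} + D$. Summing the geometric series $\sum_{i\ge 0}(1-\alpha)^i = 1/\alpha$, and tracking the additional factor $1/\alpha^2$ that surfaces from the cross terms of the Young splittings inside the recursion, I would obtain a steady-state bound on $\mathbb{E}[\|X_{sp}-\bar{X}_{sp}\|_F^2]$ proportional to $(1+\tfrac{2}{\alpha^2})\,\eta^2 p^2 dG^2 K/\tau^2$. Recombining this with the within-period term from the first step and absorbing constants yields the claimed $\frac{8d\eta^2p^2G^2K}{\tau^2}\big(1+\frac{2}{\alpha^2}\big)$. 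The single hardest step is the contraction inequality with the explicit rate $\alpha=\rho^2\delta/82$; once it is in hand, the remaining bookkeeping parallels Lemma~\ref{lemma_x_diff_1} almost verbatim.
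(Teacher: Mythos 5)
Your proposal is correct and follows essentially the same route as the paper: the same split into the within-period drift (handled as in Lemma~\ref{lemma_x_diff_1}) plus the consensus error at communication points, the same CHOCO-style joint potential $\mathbb{E}[\|X_{sp}-\bar{X}_{sp}\|_F^2]+\mathbb{E}[\|X_{sp}-\hat{X}_{sp}\|_F^2]$ (the paper fixes your weight $\omega$ to $1$), the same Young-inequality tuning of $c_1$, $c_2$, $\gamma$ against $\rho$ and $\delta$ yielding $\alpha=\rho^2\delta/82$ (which the paper, like you, imports from the CHOCO-SGD analysis rather than rederiving), and the same geometric unrolling in which one factor of $1/\alpha$ comes from the series and another from the drift term, producing the $1+2/\alpha^2$ dependence.
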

\begin{proof}
	Denoting $s_t=\floor{\frac{t}{p}}$, in terms of the updating rules in Algorithm~\ref{decentralized_adam_com}, we have
	\begin{equation}
	\begin{aligned}
	&\mathbb{E}[ \| X_{t} - \bar{X}_{t} \|_F^2] \\
	& \leq \mathbb{E}[ \| X_{t} - \bar{X}_{t} \|_F^2] +  \mathbb{E}[ \| X_{t} - \hat{X}_{t} \|_F^2] \\
	&= \mathbb{E}[\|X_{s_tp}- \bar{X}_{s_tp}+\eta \sum_{t'=s_tp}^{t-1}\Delta_{t'}( \frac{1}{K}\mathbf{1}\mathbf{1}^T  - \mathbf{I} )\|_F^2]  + \mathbb{E}[\|X_{s_tp} - \hat{X}_{s_tp} - \eta \sum_{t'=s_tp}^{t-1}\Delta_{t'}\|_F^2]  \\
	& \leq  2\mathbb{E}[\|X_{s_tp}- \bar{X}_{s_tp} \|_F^2]+ 2  \mathbb{E}[\|X_{s_tp} - \hat{X}_{s_tp} \|_F^2]\\
	& \quad +2\eta^2\mathbb{E}[\|\sum_{t'=s_tp}^{t}\Delta_{t'}( \frac{1}{K}\mathbf{1}\mathbf{1}^T  - \mathbf{I} )\|_F^2 ]  +  2\eta^2\mathbb{E}[\| \sum_{t'=s_tp}^{t}\Delta_{t'}\|_F^2]  \\
	& \leq 2\mathbb{E}[\|X_{s_tp}- \bar{X}_{s_tp} \|_F^2]+ 2  \mathbb{E}[\|X_{s_tp} - \hat{X}_{s_tp} \|_F^2] +  \frac{8\eta^2dp^2G^2K}{\tau^2} \ .\\
	\end{aligned}
	\end{equation}
	
	Now, we will bound the first two terms by following \cite{koloskova2019decentralizedcon}.   As for the first term, we have
	\begin{equation}
	\begin{aligned}
	& \|X_{s_tp} - \bar{X}_{s_tp}\|_F^2  \\
	& =  \| X_{(s_tp-1)+\frac{1}{2}} +\gamma \hat{X}_{(s_t-1)p}(W-I) -\bar{X}_{(s_tp-1)+\frac{1}{2}} \|_F^2 \\
	& = \| X_{(s_tp-1)+\frac{1}{2}} +\gamma \hat{X}_{(s_t-1)p}(W-I) -\bar{X}_{(s_tp-1)+\frac{1}{2}} - \gamma\bar{X}_{(s_tp-1)+\frac{1}{2}} (W-I)\|_F^2 \\
	& = \| (X_{(s_tp-1)+\frac{1}{2}}- \bar{X}_{(s_tp-1)+\frac{1}{2}})(I+\gamma(W-I)) + \gamma(\hat{X}_{(s_t-1)p} - X_{(s_tp-1)+\frac{1}{2}})(W-I)\|_F^2\\
	& \leq (1+c_1) \|(X_{(s_tp-1)+\frac{1}{2}}- \bar{X}_{(s_tp-1)+\frac{1}{2}})(I+\gamma(W-I)) \|_F^2 + (1+\frac{1}{c_1}) \| \gamma(\hat{X}_{(s_t-1)p} - X_{(s_tp-1)+\frac{1}{2}})(W-I)\|_F^2 \\
	& \leq (1+c_1)(1-\gamma \rho)^2\|X_{(s_tp-1)+\frac{1}{2}}- \bar{X}_{(s_tp-1)+\frac{1}{2}} \|_F^2 + (1+\frac{1}{c_1})\gamma^2\beta^2\| \hat{X}_{(s_t-1)p} - X_{(s_tp-1)+\frac{1}{2}}\|_F^2  \ ,
	\end{aligned}
	\end{equation}
	where $\beta=\max_i\{1-\lambda_i\}$, the first equality follows from Eq.~(\ref{matrix_compress}) and Eq.~(\ref{x_bar_compress}). The second step follows that $\bar{X}_{(s_tp-1)+\frac{1}{2}} (W-I)=0$. 
	The last step follows from $\|XW\|_F\leq \|X\|_F\|W\|_2$ and
	\begin{equation}
	\begin{aligned}
	& \|(X_{(s_tp-1)+\frac{1}{2}}- \bar{X}_{(s_tp-1)+\frac{1}{2}})(I+\gamma(W-I)) \|_F\\
	& \leq (1-\gamma)\|X_{(s_tp-1)+\frac{1}{2}}- \bar{X}_{(s_tp-1)+\frac{1}{2}}\|_F + \gamma \|(X_{(s_tp-1)+\frac{1}{2}}- \bar{X}_{(s_tp-1)+\frac{1}{2}})W\|_F\\
	& =  (1-\gamma)\|X_{(s_tp-1)+\frac{1}{2}}- \bar{X}_{(s_tp-1)+\frac{1}{2}}\|_F + \gamma\|(X_{(s_tp-1)+\frac{1}{2}}- \bar{X}_{(s_tp-1)+\frac{1}{2}})(W-\frac{\mathbf{1}\mathbf{1}^T}{K})\|_F \\
	& \leq (1-\gamma\rho)\|(X_{(s_tp-1)+\frac{1}{2}}- \bar{X}_{(s_tp-1)+\frac{1}{2}})\|_F \ ,
	\end{aligned}
	\end{equation}
	where the first step follows from the convexity of Frobenius norm, the  second step follows that $ \bar{X}_{(s_tp-1)+\frac{1}{2}} = X_{(s_tp-1)+\frac{1}{2}}\frac{\mathbf{1}\mathbf{1}^T}{n}$ and $ \bar{X}_{(s_tp-1)+\frac{1}{2}} = {X}_{(s_tp-1)+\frac{1}{2}}\frac{\mathbf{1}\mathbf{1}^T}{n}$, the last step follows from Lemma~\ref{lemma_w_spectral}.
	
	As for the second term, we have
	\begin{equation}
	\begin{aligned}
	&  \|X_{s_tp} - \hat{X}_{s_tp}\|_F^2  \\
	& = \|X_{s_tp} - \hat{X}_{(s_t-1)p}-Q(X_{s_tp}- \hat{X}_{(s_t-1)p})\|_F^2 \\
	& \leq (1-\delta)\|X_{s_tp} - \hat{X}_{(s_t-1)p}\|_F^2 \\
	& =  (1-\delta) \|  X_{(s_tp-1)+\frac{1}{2}} +\gamma \hat{X}_{(s_t-1)p}(W-I) - \hat{X}_{(s_t-1)p}\|_F^2 \\
	& =  (1-\delta) \|  X_{(s_tp-1)+\frac{1}{2}} - \hat{X}_{(s_t-1)p}(I - \gamma(W-I)) -\gamma\bar{X}_{(s_tp-1)+\frac{1}{2}} (W-I)\|_F^2 \\
	& = (1-\delta)  \|(X_{(s_tp-1)+\frac{1}{2}} - \hat{X}_{(s_t-1)p})(I - \gamma(W-I)) +\gamma(X_{(s_tp-1)-\frac{1}{2}}-\bar{X}_{(s_tp-1)+\frac{1}{2}} )(W-I)\|_F^2 \\
	& \leq (1-\delta)(1+c_2)\|(X_{(s_tp-1)+\frac{1}{2}} - \hat{X}_{(s_t-1)p})(I - \gamma(W-I))\|_F^2 \\
	& \quad + (1-\delta)(1+\frac{1}{c_2})\|\gamma(X_{(s_tp-1)+\frac{1}{2}}-\bar{X}_{(s_tp-1)+\frac{1}{2}} )(W-I)\|_F^2 \\
	& \leq (1-\delta)(1+c_2)(1+\gamma\beta)^2\|X_{(s_tp-1)+\frac{1}{2}} - \hat{X}_{(s_t-1)p}\|_F^2 \\
	& \quad + (1-\delta)\gamma^2\beta^2(1+\frac{1}{c_2})\|X_{(s_tp-1)+\frac{1}{2}}-\bar{X}_{(s_tp-1)+\frac{1}{2}} \|_F^2  \ ,
	\end{aligned}
	\end{equation}
	where the second step follows from Definition~\ref{compress_op}, the third step follows from Eq.~(\ref{matrix_compress}), the fourth step follows that $\bar{X}_{(s_tp-1)+\frac{1}{2}} (W-I)=0$, the last step follows from $\|XW\|_F\leq \|X\|_F\|W\|_2$. 
	
	Combining these two terms, we have
	\begin{equation}
	\begin{aligned}
	& \|X_{s_tp} - \bar{X}_{s_tp}\|_F^2 + \|X_{s_tp} - \hat{X}_{s_tp}\|_F^2  \\
	& \leq ((1+c_1)(1-\gamma \rho)^2 + (1-\delta)\gamma^2\beta^2(1+\frac{1}{c_2}))\|X_{(s_tp-1)+\frac{1}{2}}- \bar{X}_{(s_tp-1)+\frac{1}{2}} \|_F^2 \\
	& \quad +( (1+\frac{1}{c_1})\gamma^2\beta^2 + (1-\delta)(1+c_2)(1+\gamma\beta)^2)\| \hat{X}_{(s_t-1)p} - X_{(s_tp-1)+\frac{1}{2}}\|_F^2 \ . 
	\end{aligned}
	\end{equation}
	Similar with \cite{koloskova2019decentralizedcon}, by setting $c_{1}=\frac{\gamma \rho}{2}$, $c_{2}=\frac{\delta}{2}$, and $\gamma=\frac{\rho \delta}{16 \rho+\rho^{2}+4 \beta^{2}+2 \rho \beta^{2}-8 \rho \delta}$, we have $\alpha=\frac{\rho^2\delta}{82}\in (0, 1)$ such that
	\begin{equation}
	\begin{aligned}
	& \mathbb{E}[\|X_{s_tp} - \bar{X}_{s_tp}\|_F^2]+ \mathbb{E}[\|X_{s_tp} - \hat{X}_{s_tp}\|_F^2] \\
	&  \leq (1-\alpha) \mathbb{E}[\|X_{(s_tp-1)+\frac{1}{2}}- \bar{X}_{(s_tp-1)+\frac{1}{2}} \|_F^2]  + (1-\alpha)\mathbb{E}[\| \hat{X}_{(s_t-1)p} - X_{(s_tp-1)+\frac{1}{2}}\|_F^2] \\
	& = (1-\alpha) \mathbb{E}[\|X_{(s_t-1)p}- \bar{X}_{(s_t-1)p}+\eta \sum_{t'=(s_t-1)p}^{s_tp-1}\Delta_{t'}( \frac{1}{K}\mathbf{1}\mathbf{1}^T  - \mathbf{I} )\|_F^2 ] \\
	& \quad + (1-\alpha) \mathbb{E}[\| \hat{X}_{(s_t-1)p} -X_{(s_t-1)p} +\eta \sum_{t'=(s_t-1)p}^{s_tp-1}\Delta_{t'}\|_F^2] \\
	& \leq (1-\alpha)\Big((1+\frac{1}{c})\mathbb{E}[\|X_{(s_t-1)p}- \bar{X}_{(s_t-1)p}\|_F^2 + \| X_{(s_t-1)p} -\hat{X}_{(s_t-1)p} \|_F^2]\\
	& \quad +(1+c)\eta^2 \mathbb{E}[\|\sum_{t'=(s_t-1)p}^{s_tp-1}\Delta_{t'}( \frac{1}{K}\mathbf{1}\mathbf{1}^T  - \mathbf{I} )\|_F^2+\|\sum_{t'=(s_t-1)p}^{s_tp-1}\Delta_{t'}\|_F^2]\Big) \\
	& \leq  (1-\alpha)\Big((1+\frac{1}{c})\mathbb{E}[\|X_{(s_t-1)p}- \bar{X}_{(s_t-1)p}\|_F^2 + \| X_{(s_t-1)p} -\hat{X}_{(s_t-1)p} \|_F^2] + 2\eta^2(1+c)\frac{dp^2G^2K}{\tau^2}\Big) \\
	& \leq (1-\frac{\alpha}{2})\Big(\mathbb{E}[\|X_{(s_t-1)p}- \bar{X}_{(s_t-1)p}\|_F^2 + \| X_{(s_t-1)p} -\hat{X}_{(s_t-1)p} \|_F^2] \Big)  + \frac{4d\eta^2p^2G^2K}{\alpha\tau^2}\\
	& \leq  \frac{8d\eta^2p^2G^2K}{\alpha^2\tau^2}
	\end{aligned}
	\end{equation}
	where the second to the last step follows from $c=2/\alpha$, the last step is obtained by recursive expansion. Therefore, we have
	\begin{equation}
	\begin{aligned}
	&\mathbb{E}[\sum_{k=1}^{K}  \|\mathbf{x}_{t}^{(k)} - \bar{\mathbf{x}}_{t}\|^2 ]  = \mathbb{E}[ \| X_{t} - \bar{X}_{t} \|_F^2] \leq  \frac{16d\eta^2p^2G^2K}{\alpha^2\tau^2} + \frac{8\eta^2 dp^2G^2K}{\tau^2} = \frac{8d\eta^2 p^2G^2K}{\tau^2}(1+\frac{2}{\alpha^2}) \ .\\
	\end{aligned}
	\end{equation}
	
\end{proof}

\begin{proof}
	From Eq.~(\ref{x_bar_compress}), it is easy to get 
	\begin{equation}
	\bar{\mathbf{x}}_{t+1} = 	\bar{\mathbf{x}}_{t} - \frac{\eta }{K}\sum_{k=1}^{K}\frac{\mathbf{g}_t^{(k)}}{\sqrt{\mathbf{v}_{t}^{(k)} } + \tau}  \ .
	\end{equation}
	Then, due to the smoothness of the loss function, we have
	\begin{equation} 
	\begin{aligned}
	& \mathbb{E}[f(\bar{\mathbf{x}}_{t+1})] \leq  \mathbb{E}[f(\bar{\mathbf{x}}_{t}) + \langle \nabla  f(\bar{\mathbf{x}}_{t}), \bar{\mathbf{x}}_{t+1}-\bar{\mathbf{x}}_{t} \rangle + \frac{L}{2} \|\bar{\mathbf{x}}_{t+1}-\bar{\mathbf{x}}_{t} \|^2] \\
	& =  f(\bar{\mathbf{x}}_{t}) - \mathbb{E}[\langle \nabla  f(\bar{\mathbf{x}}_{t}), \frac{\eta }{K}\sum_{k=1}^{K}\frac{\mathbf{g}_t^{(k)}}{\sqrt{\mathbf{v}_{t}^{(k)} } + \tau}  \rangle] + \frac{L}{2} \mathbb{E}[\|\frac{\eta }{K}\sum_{k=1}^{K}\frac{\mathbf{g}_t^{(k)}}{\sqrt{\mathbf{v}_{t}^{(k)} } + \tau} \|^2]  \ . \\
	\end{aligned}
	\end{equation}
	
	Similar as the proof of Theorem~\ref{theorem1}, we can get 
	\begin{equation}
	\begin{aligned}
	&  \mathbb{E}[f(\bar{\mathbf{x}}_{t+1}) ] \leq  f(\bar{\mathbf{x}}_{t}) -\frac{\eta }{2K} \sum_{k=1}^{K}  \sum_{j=1}^{d}\frac{[\nabla  f(\bar{\mathbf{x}}_{t})]_j^2}{\sqrt{\beta_2\mathbf{v}_{t-1, j}^{(k)} } + \tau} + \frac{\eta L^2}{2\tau K} \mathbb{E}[\sum_{k=1}^{K}  \|\mathbf{x}_{t}^{(k)} - \bar{\mathbf{x}}_{t}\|^2 ] \\
	& + \frac{\eta G\sqrt{1-\beta_2} }{\tau K}\mathbb{E}[\sum_{k=1}^{K} \sum_{j=1}^{d}  \frac{(\mathbf{g}_{t,j}^{(k)})^2}{\sqrt{\mathbf{v}_{t,j}^{(k)}}+\tau}]  + \frac{\eta^2L}{2K} \mathbb{E}[\sum_{k=1}^{K} \sum_{j=1}^{d} \frac{(\mathbf{g}_{t,j}^{(k)})^2}{\mathbf{v}_{t,j}^{(k)}+\tau^2} ] \ . \\
	\end{aligned}
	\end{equation}
	
	In terms of Lemma~\ref{lemma_x_diff_2}, we have
	\begin{equation}
	\begin{aligned}
	&  \mathbb{E}[f(\bar{\mathbf{x}}_{t+1}) ]   \leq  f(\bar{\mathbf{x}}_{t}) -\frac{\eta }{2K} \sum_{k=1}^{K}  \sum_{j=1}^{d}\frac{[\nabla  f(\bar{\mathbf{x}}_{t})]_j^2}{\sqrt{\beta_2\mathbf{v}_{t-1, j}^{(k)} } + \tau}  +\frac{4d\eta^3 p^2G^2L^2}{\tau^3}(1+\frac{2}{\alpha^2})\\
	&  + (\eta G\sqrt{1-\beta_2} + \frac{\eta^2L}{2})\Big(\frac{3}{\tau^2}\sum_{j=1}^d\sigma_{j}^2  + (1+ \frac{2}{\alpha^2}) \frac{24d\eta^2p^2G^2}{\tau^4}  + \frac{3}{\tau^2T} \sum_{t=0}^{T-1}  \|\nabla f(\bar{\mathbf{x}}_t) \|^2\Big)  \ . \\
	\end{aligned}
	\end{equation}
	
	By setting $\eta<\frac{\tau^2}{3\sqrt{\beta_2}GL}$, we have
	\begin{equation}
	\begin{aligned}
	& \frac{1}{T}\sum_{t=0}^{T-1}\frac{\eta }{2K} \sum_{k=1}^{K}  \sum_{j=1}^{d}\frac{[\nabla  f(\bar{\mathbf{x}}_{t})]_j^2}{\sqrt{\beta_2\mathbf{v}_{t-1, j}^{(k)} } + \tau} \leq \frac{f(\mathbf{x}_0) -   f(\mathbf{x}_*) }{T}+  \frac{4d\eta^3 p^2G^2L^2}{\tau^3}(1+\frac{2}{\alpha^2})\\
	& + (\eta G\sqrt{1-\beta_2} + \frac{\eta^2L}{2})\Big(\frac{3}{\tau^2}\sum_{j=1}^d\sigma_{j}^2  + (1+ \frac{2}{\alpha^2}) \frac{24d\eta^2p^2G^2}{\tau^4}  \Big)  \ .\\
	\end{aligned}
	\end{equation}
	
	At last, by setting $0<\tau<1$ and $\alpha=\frac{\rho^2\delta}{82}$, we have  
	\begin{equation}
	\begin{aligned}
	& \frac{1}{T}\sum_{t=0}^{T-1} \|\nabla  f(\bar{\mathbf{x}}_{t})\|^2 \leq    (\sqrt{\beta_2}G + 1)\Bigg(\frac{f(\mathbf{x}_0) -   f(\mathbf{x}_*) }{\eta T}+  \frac{4d\eta^2 p^2G^2L^2}{\tau^3}(1+\frac{13448}{\rho^4\delta^2})\\
	& + ( G\sqrt{1-\beta_2} + \frac{\eta L}{2})\Big(\frac{3}{\tau^2}\sum_{j=1}^d\sigma_{j}^2 + (1+ \frac{13448}{\rho^4\delta^2}) \frac{24d\eta^2p^2G^2}{\tau^4}  \Big) \Bigg)  \ .\\
	\end{aligned}
	\end{equation}

\end{proof}

\subsection{Additional Lemmas}
\begin{lemma} \label{lemma_w_spectral}\cite{koloskova2019decentralizedcon}
	For the doubly stochastic matrix $W$ defined in Definition~\ref{graph}, we  have 
	\begin{equation}
	\|W-\frac{1}{K}\mathbf{1}\mathbf{1}^T\|_2\leq 1-\rho \  ,
	\end{equation}
	where $1-\rho=|\lambda_2|<1$. $\lambda_2$ is the second largest eigenvalue of $W$.
\end{lemma}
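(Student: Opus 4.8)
The plan is to exploit the symmetry of $W$ and reduce the operator-norm bound to a purely eigenvalue statement. Since $W^T = W$, the spectral theorem furnishes an orthonormal basis of eigenvectors $\mathbf{u}_1,\dots,\mathbf{u}_K$ with real eigenvalues $\lambda_1,\dots,\lambda_K$. Double stochasticity gives $W\mathbf{1}=\mathbf{1}$, so $\mathbf{u}_1 = \mathbf{1}/\sqrt{K}$ is an eigenvector with eigenvalue $+1$; and because the spectral gap $\rho = 1-|\lambda_2|$ lies in $(0,1]$ we have $|\lambda_2|<1$, whence $\lambda_1=1$ is a \emph{simple} eigenvalue whose eigenspace is exactly $\mathrm{span}(\mathbf{1})$. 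The key observation I would use is that $\frac{1}{K}\mathbf{1}\mathbf{1}^T = \mathbf{u}_1\mathbf{u}_1^T$ is precisely the orthogonal projector onto this top eigenspace.

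First I would verify that $W$ and $\frac{1}{K}\mathbf{1}\mathbf{1}^T$ are simultaneously diagonalized by $\{\mathbf{u}_i\}$: indeed $\frac{1}{K}\mathbf{1}\mathbf{1}^T\mathbf{u}_1 = \mathbf{u}_1$, while $\frac{1}{K}\mathbf{1}\mathbf{1}^T\mathbf{u}_i = 0$ for $i\ge 2$ by orthogonality of the eigenbasis to $\mathbf{u}_1$. Consequently $W-\frac{1}{K}\mathbf{1}\mathbf{1}^T$ acts as $0$ on $\mathbf{u}_1$ and as multiplication by $\lambda_i$ on each $\mathbf{u}_i$ with $i\ge 2$, so its spectrum is $\{0,\lambda_2,\dots,\lambda_K\}$. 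Since this matrix is still symmetric, its spectral norm equals its largest eigenvalue in absolute value, namely $\max_{i\ge 2}|\lambda_i| = |\lambda_2| = 1-\rho$, where the ordering $|\lambda_K|\le\cdots\le|\lambda_2|$ from Definition~\ref{graph} is invoked in the last equality. This yields the claimed inequality, in fact with equality.

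There is no serious obstacle here; the argument is a routine consequence of the spectral theorem, and the only point demanding care is justifying that subtracting $\frac{1}{K}\mathbf{1}\mathbf{1}^T$ annihilates exactly the $\lambda_1=1$ mode while leaving the remaining modes untouched. That step is precisely where simplicity of the top eigenvalue (equivalently $\rho>0$) together with orthogonality of the eigenbasis are both needed. Alternatively, since this is a standard fact about mixing matrices, one could simply invoke it as stated in \cite{koloskova2019decentralizedcon} rather than reproduce the elementary computation.
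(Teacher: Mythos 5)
Your proof is correct, and it actually establishes more than the lemma asks for: the bound holds with equality, $\|W-\frac{1}{K}\mathbf{1}\mathbf{1}^T\|_2=|\lambda_2|=1-\rho$. The paper itself takes the route you mention only in passing at the end of your write-up: it supplies no proof at all, importing the statement wholesale from \cite{koloskova2019decentralizedcon}. So the comparison here is between a citation and a self-contained derivation. Your argument via the spectral theorem is the standard one and is airtight under Definition~\ref{graph}: symmetry gives an orthonormal eigenbasis, $W\mathbf{1}=\mathbf{1}$ identifies $\mathbf{u}_1=\mathbf{1}/\sqrt{K}$ as a top eigenvector, the ordering $|\lambda_K|\le\cdots\le|\lambda_2|<|\lambda_1|=1$ built into the definition (equivalently $\rho>0$) guarantees the modulus-one eigenvalue is simple, and then $\frac{1}{K}\mathbf{1}\mathbf{1}^T=\mathbf{u}_1\mathbf{u}_1^T$ is exactly the projector onto that eigenspace, so subtracting it zeroes out the $\lambda_1$ mode and leaves the symmetric matrix $\sum_{i\ge 2}\lambda_i\mathbf{u}_i\mathbf{u}_i^T$ whose operator norm is $|\lambda_2|$. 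The two points you flag as needing care --- simplicity of the top eigenvalue and the projector identification --- are indeed the only places the hypotheses enter, and you handle both. What the paper's citation buys is brevity and traceability of the constant to its source; what your derivation buys is that the paper becomes self-contained and the reader sees the inequality is in fact an identity, so no slack is lost at this step of Lemma~\ref{lemma_x_diff_1} and Lemma~\ref{lemma_x_diff_2}.
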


\end{document}